\documentclass[conference]{IEEEtran}
\IEEEoverridecommandlockouts
\usepackage{cite}
\usepackage{amsmath,amssymb,amsfonts,amsthm}
\usepackage{algorithmic}
\usepackage{graphicx}
\usepackage{textcomp}
\usepackage{xcolor}
\usepackage{float}
\usepackage{placeins}  
\usepackage{setspace}
\usepackage{graphicx}
\usepackage[font=small]{caption} % 设置字体大小
\usepackage[skip=6pt]{caption}
\usepackage{bm}
\usepackage[font=footnotesize]{caption}
\usepackage{booktabs}
\usepackage{multirow}
\usepackage{colortbl}
\usepackage[hidelinks]{hyperref}

\hypersetup{
    colorlinks=false,
    pdfauthor={Mengxiang Liu, Xin Zhang, Shiyi Zhao, and Ruilong Deng},
    pdftitle={Hierarchical Detection and Mitigation Framework against Sensor Spoofing Attacks in Networked DC Microgrids},
    pdfsubject={Sensor spoofing attack detection and mitigation in networked microgrids},
    pdfkeywords={sensor spoofing attack, networked microgrids, proactive detection, impact mitigation, multi-layer coordination}
}

\def\BibTeX{{\rm B\kern-.05em{\sc i\kern-.025em b}\kern-.08em
    T\kern-.1667em\lower.7ex\hbox{E}\kern-.125emX}}
\begin{document}

\newtheorem{Propos}{Proposition}

\title{
% \fontsize{18}{24}\selectfont 
% Coding Matrix based Proactive Detection with Detectability-Hiddenness Optimisation in Cyber-Physical Microgrids
% Hierarchical Detection and Mitigation Framework against Sensor Spoofing Attacks in Networked {\color{black}DC} Microgrids
\color{black}{Hierarchical Sensor-Spoofing Defence Framework for Networked DC Microgrids via Cyber-Physical Coordination}
% Detection Using Meter Coding Scheme in Cyber-Physical Power Systems
\\
% {\footnotesize \textsuperscript{*}Note: Sub-titles are not captured in Xplore and
% should not be used}
\thanks{
This work was supported in part by the U.K. Research and Innovation Future Leaders Fellowship ‘Digitalisation of Electrical Power and Energy Systems Operation’ under Grant MR/W011360/2. 
% in part by the U.K. Engineering and Physical Sciences Research Council (EPSRC) through Marie Sklodowska-Curie Actions (MSCA) Postdoctoral
% Fellowship CREDIT-NMG under Grant EP/Z533531/1, 
% and in part by the Royal Society International Exchanges under Grant IEC{\textbackslash}NSFC{\textbackslash}242449.
}
\thanks{$^{1}$School of Electrical and Electronic Engineering, The University of Sheffield, UK;
% }
% \thanks{
% $^{2}$ Department of Electrical and Electronic Engineering, Imperial College London, UK;
% }\thanks{
$^{2}$School of Computer Science, University of Bristol, UK.
$^{3}$Department of Control Science and Engineering, Zhejiang University, 
China.
(\textit{Corresponding Authors: Xin Zhang, Ruilong Deng})}
% University of Sheffield; R. Deng is with 
% The authors are with the School of Electronic and Electrical Engineering, University of Sheffiel
}

% \author{\IEEEauthorblockN{1\textsuperscript{st} Zhuoran Zhou}
% \IEEEauthorblockA{\textit{School of Electrical and Electronic Engineering} \\
% \textit{University of Sheffield}\\
% Sheffield, UK \\
% zzhou106@Sheffield.ac.uk}
% \and
% \IEEEauthorblockN{2\textsuperscript{nd} Given Name Surname}
% \IEEEauthorblockA{\textit{dept. name of organization (of Aff.)} \\
% \textit{name of organization (of Aff.)}\\
% City, Country \\
% email address or ORCID}
% \and
% \IEEEauthorblockN{3\textsuperscript{rd} Given Name Surname}
% \IEEEauthorblockA{\textit{dept. name of organization (of Aff.)} \\
% \textit{name of organization (of Aff.)}\\
% City, Country \\
% email address or ORCID}
% \and
% \IEEEauthorblockN{4\textsuperscript{th} Given Name Surname}
% \IEEEauthorblockA{\textit{dept. name of organization (of Aff.)} \\
% \textit{name of organization (of Aff.)}\\
% City, Country \\
% email address or ORCID}
% \and
% \IEEEauthorblockN{5\textsuperscript{th} Given Name Surname}
% \IEEEauthorblockA{\textit{dept. name of organization (of Aff.)} \\
% \textit{name of organization (of Aff.)}\\
% City, Country \\
% email address or ORCID}
% \and
% \IEEEauthorblockN{6\textsuperscript{th} Given Name Surname}
% \IEEEauthorblockA{\textit{dept. name of organization (of Aff.)} \\
% \textit{name of organization (of Aff.)}\\
% City, Country \\
% email address or ORCID}
% }

% \author{Zhuoran Zhou, Mengxiang Liu, Xin Zhang}
\author{Mengxiang Liu$^{1,2}$, Xin Zhang$^{1}$, Shiyi Zhao$^{3}$, and Ruilong Deng$^{3}$}
% \vspace{-30pt}
\maketitle
\thispagestyle{plain}
\pagestyle{plain}
% \vspace{-30pt}
\begin{spacing}{1}
% \vspace{-30pt}
\begin{abstract}
% \vspace{-30pt}

In parallel to the cyber attack that manipulates the reference points of distributed energy resources (DERs) by maliciously accessing the remote monitoring and control system, the vulnerability of voltage/current sensors to electromagnetic interference (EMI) in the physical domain has been widely discussed.
Existing research efforts against sensor spoofing attacks can be classified into physical prevention and cyber detection/mitigation. 
These defence methods each have strengths and weaknesses in balancing cost, security, and performance in a single DER, yet systematic research on their multi-layer efficient coordination across DERs remains limited. Towards this end, this paper proposes a hierarchical framework to detect and mitigate sensor spoofing attacks in networked {\color{black}{DC}} microgrids (NMGs) via {multi-layer cyber-physical coordination}. It requires only to deploy physical prevention technologies at critical points, i.e., the local points of common coupling (PCC) of MGs, such that cyber detection/mitigation algorithms can be adopted based on the secured sensor readings to counter sensor spoofing attacks in DERs. The framework employs {an} MG-DER coordinated proactive detection scheme to strategically trigger parameter perturbations, under which the intelligent sensor spoofing attacks can be {effectively} disclosed.
Afterwards, mitigation schemes based on MG-DER coordination are activated to recursively and accurately estimate sensor biases.
Experiments on a cyber-physical {\color{black}{DC}} NMG testbed confirm the framework's effectiveness across diverse attack scenarios.
\end{abstract}

\begin{IEEEkeywords}
Sensor spoofing attack, networked microgrids, hierarchical defence, cyber-physical coordination
\end{IEEEkeywords}

\section{Introduction}
The power grid is undergoing rapid digitalisation to accommodate the integration of distributed energy resources (DERs), improving {convenient energy management and cost-efficient operation}. However, this digital connection also exposes DERs to numerous cyber threats such as {unauthorised access to} the remote monitoring/management system to manipulate their reference points \cite{BH2025solar,Forescout2025}, which has stimulated extensive concerns in the cybersecurity and power system communities \cite{9737024,9583906,9832494}. Recently, the sensors within DERs are suspected to be vulnerable to electromagnetic interference (EMI), which could affect current and voltage readings by injecting it into the differential operational amplifier of {the} sensor \cite{barua2020hall}.
% , compared with the previous mentioned cyber manipulator, comes in an unconventional way to affect the voltage and current readings by emitting carefully and intentionally crafted EMI near the sensor \cite{barua2020hall}. 
This so-called sensor spoofing attack can lead to three primary consequences: denial-of-service, physical damage to power inverters, and a reduction in the power output of DERs \cite{yang2024rethink}. These attacks are particularly concerning in modern power grids, where the increasing penetration of DERs reduces physical inertia, indirectly making the power grid sensitive to attack disturbances
% potentially leading to fast-propagating grid collapse 
\cite{9796617,11029084}.

Numerous research efforts have been devoted to resisting sensor spoofing attacks, which, considering the attack's {physical-domain origin}, are divided into hardware-assisted sensor hardening and cyber detection/mitigation \cite{10459229}. Hardware-assisted sensor hardening includes matched-dummy transduction-shield circuits that detect and correct EMI-corrupted measurements \cite{tu2021transduction} and real-time in-sensor defence mechanisms against magnetic spoofing on Hall sensors \cite{barua2022halc}, which are representative sensor-side circuit hardening methods against EMI-induced measurement corruption \cite{5941844}. 
% Tu \textit{et al.} proposed a method to detect and correct malicious EMI injections using a matched dummy sensor circuit \cite{tu2021transduction}. This circuit mirrors the sensor’s vulnerability to injected EMI but remains unaffected by legitimate signals the sensor is designed to measure. To strengthen defences against EMI injections of varying frequencies and strengths, a real-time in-sensor defence mechanism was developed \cite{barua2022halc}. This mechanism employs two parallel cores: an analog core that filters out fake time-dependent magnetic fields using fast-order filters, and a digital core that eliminates fake constant fields with a DC feedback signal, preserving the original signal. 
However, although physical prevention technologies can effectively counter EMI injections, they often demand additional hardware modifications or installations, raising costs and extending {the implementation period}.

Cyber defence strategies use advanced data processing to analyse and model the dynamic behaviour of sensor measurements, enabling anomaly detection and impact mitigation against sensor spoofing attacks \cite{10836761}. Unlike hardware-assisted sensor hardening, which often requires sensor-side circuit modifications, cyber strategies mainly leverage model-based observers, robust estimation, and data-driven analysis to perceive and respond to sensor spoofing attacks. Pan \textit{et al.} modelled the grid-tied photovoltaic (PV) inverter as a linear time-varying system to account for changing environmental factors and unknown system states, applying robust estimation and control techniques to detect and mitigate sensor spoofing attacks \cite{10707330,10643338}. Peng \textit{et al.} integrated the physical dynamics of the grid-tied PV system into a graph neural network using a dynamic adjacency matrix, developing a learning-based method to detect sensor spoofing attacks with enhanced accuracy and interpretability \cite{10638139}. Zhang \textit{et al.} examined a PV farm aggregating multiple PV units at a point of common coupling (PCC), employing harmonic state space modelling to characterise state correlations in the frequency domain for attack detection \cite{9580468}.
More recently, in the context of networked {microgrids} (NMGs), where geographically close DERs are aggregated into multiple MGs for hierarchical control and operation, the intricate cyber-physical interdependences among DERs {have} stimulated extensive discussions. Unknown input observer (UIO) was leveraged to decouple the impacts of these interdependences, {supporting recursive detection and mitigation of} sensor spoofing attacks within DERs \cite{10746504}. Given the increasing sophistication of cyber threats, proactive detection methods were developed to resist intelligent adversaries {with certain model knowledge} by strategically perturbing control gains \cite{9621221} and adding physics-aware watermarks \cite{10643207}.

{\color{black}Despite the significant progress achieved in cyber-physical attack detection and mitigation, the research gaps become evident when representative studies are examined from the three aspects summarised in Table \ref{tab:introcomparison}. First, most existing cyber-layer defence methods are primarily designed for traditional false data injection attacks or model-level cyber attacks, while giving limited consideration to the physical-domain origin and propagation mechanism of sensor spoofing attacks \cite{10643207,10836761,9621221}. Consequently, matrix coding \cite{10836761}, control perturbation \cite{9621221}, and watermark embedding \cite{10643207} may be ineffective against spoofing biases that have already corrupted the sensor measurements before data transmission to the controller. Second, the absence of hierarchical detection makes it difficult to jointly balance detection performance and operational cost, particularly under stealthy attacks that can bypass local model-based detectors \cite{10638139,9580468,9621221,10643207}. Third, the lack of hierarchical mitigation prevents existing methods from fully exploiting the hierarchical control architecture of NMGs to counter adversaries, and some methods such as \cite{10746504} may require extra hardware for impact mitigation. Therefore, a coordinated framework is still needed to disclose stealthy sensor spoofing attacks and mitigate their impacts while remaining compatible with existing sensor-side prevention mechanisms.}
\begin{table}[!t]
    \centering
    \caption{{\color{black}Comparison of this paper with representative studies.}}
    \label{tab:introcomparison}
    {\color{black}
    \setlength{\tabcolsep}{2.2pt}
    \renewcommand{\arraystretch}{1.2}
    \resizebox{\linewidth}{!}{%
    \begin{tabular}{>{\centering\arraybackslash}m{4.4cm}>{\centering\arraybackslash}m{2.3cm}>{\centering\arraybackslash}m{2.3cm}>{\centering\arraybackslash}m{2.3cm}}
        \toprule[1pt]
        \textbf{\shortstack[c]{{References for}\\{cyber defence}\\{methods}}} & \textbf{\shortstack[c]{{Effectiveness}\\{against sensor}\\{spoofing}}} & \textbf{\shortstack[c]{Hierarchical\\detection\\~}} & \textbf{\shortstack[c]{Hierarchical\\mitigation\\~}} \\
        \midrule
        Matrix Coding \cite{10836761} & No & No & No \\
        {Robust estimation} \cite{10707330,10643338} & Yes & No & No \\
        Dynamic graph \cite{10638139} & Yes & No & No \\
        {Harmonic modelling} \cite{9580468} & Yes & Partial & No \\
        {Recursive method} \cite{10746504} & Yes & No & No \\
        Control Perturbation \cite{9621221} & No & Partial & No \\
        {Watermarking} \cite{10643207} & No & No & No \\
        \rowcolor{black!12}
        \textbf{This paper} & \textbf{Yes} & \textbf{Yes} & \textbf{Yes} \\
        \bottomrule[1pt]
    \end{tabular}}}
\end{table}
% to resist intelligent sensor spoofing attacks, 
In particular, hardware-assisted approaches can be integrated into cyber defence strategies, where physical prevention technologies are deployed at critical points to enable effective cyber detection and mitigation against sensor spoofing attacks on other points. Towards this end, we propose a hierarchical attack detection and impact mitigation framework for NMGs via cyber-physical coordination, requiring only to secure the sensors at the local PCC within {each MG} using physical prevention technologies. Based on the secured sensor information, detection and impact mitigation strategies are coordinated across MG and DERs to counter the sensor spoofing attacks within DERs. The contributions are summarised as follows:
\begin{itemize}
    \item We propose a hierarchical attack detection and impact mitigation framework to counter sensor spoofing attacks in NMGs via multi-layer cyber-physical coordination.
    % by strategically integrating MG and DER resources.
    \item We design a MG-DER coordinated proactive detection scheme to enhance the detection capability against intelligent sensor spoofing attacks, where the detection effectiveness and dynamic stability under parameter perturbation are analytically analysed.
    \item We establish impact mitigation strategies under diverse attack scenarios by integrating MG-DER layer resources, where the sensor biases can be recursively and accurately estimated without requiring hardware alterations.
    \item {\color{black}The performance of proposed framework under various perturbation strengths, parameter uncertainties, heterogeneous converters, and coordination delay is validated via extensive experimental studies.}
    % in a real-time power and communication co-simulated NMG.
\end{itemize}

\begin{figure*}
    \centering
    \includegraphics[width=1\linewidth]{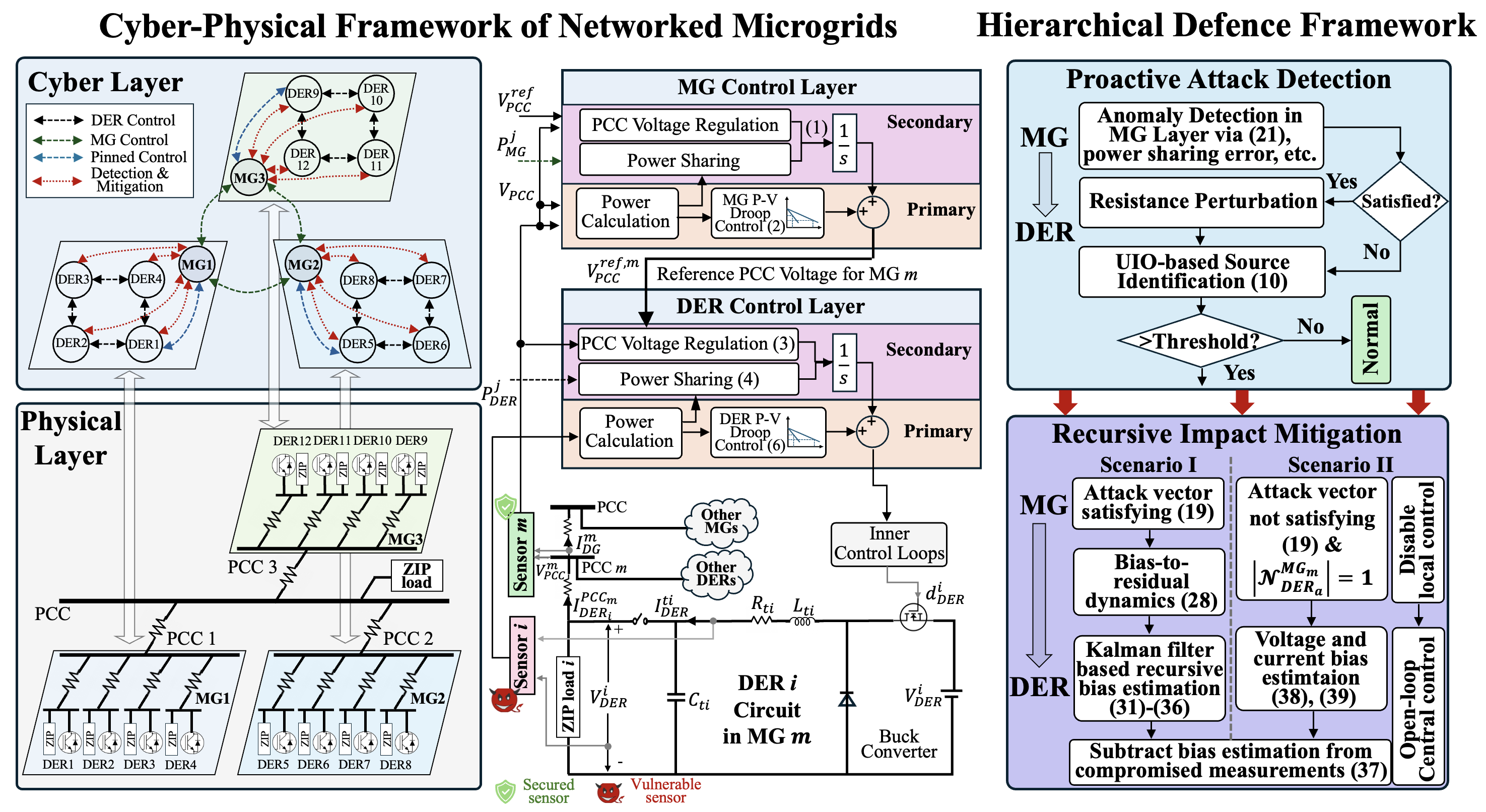}
    \caption{The figure shows the cyber-physical framework of NMGs and the proposed hierarchical defence framework, where the MG and DER layers are coordinated to proactively detect and recursively mitigate intelligent sensor spoofing attacks. For proactive detection, the anomaly observed in the MG layer is used as {a trigger} for the proactive resistance perturbation, effectively disclosing advanced sensor spoofing attacks. Then, {the information revealed by MG-layer verification is adopted by the DER layer to recursively estimate sensor biases.}
    }
    \label{fig:system-diagram}
\end{figure*}

\section{System Models}
This section introduces the {\color{black}DC} NMG system model, UIO-based detection model, and sensor spoofing attack model.

% the system model of single-bus NMG, and then details the proposed detection and mitigation framework.

\subsection{NMG System Model}
We consider a cyber-physical {\color{black}DC} NMG as shown in the left part of Fig. \ref{fig:system-diagram}, where, in the physical layer, DERs are connected to a local PCC in the MG and MGs are connected to a common PCC. Suppose that the NMG includes $N_{MG}$ number of MGs, represented by set $\mathcal{N}_{MG} = \{1, \cdots, N_{MG}\}$, and the set of DERs in MG $m\in\mathcal{N}_{MG}$ is denoted by $\mathcal{N}^{MG_m}_{DER} = \{1,\cdots,N^{MG_m}_{DER}\}$. The cyber layer consists of the DER and MG controllers as well as their intra-links and interactions with the physical circuit. 

As shown in the middle part of Fig. \ref{fig:system-diagram}, the hierarchical control scheme is adopted to accomplish the power sharing and PCC voltage regulation objectives within the MGs and NMG in a distributed manner. Following a top-to-bottom view, the control structure within MG controller $m \in \mathcal{N}_{MG}$ consists of secondary and primary controllers, where the secondary control input $\alpha^m_{MG}$ is governed by the following pinning consensus control scheme
{
\small
\begin{align}\label{eq: MG secondary}
    \dot{\alpha}_{MG}^m = 
    % \alpha_m(k-1) + 
    b_{MG}^{m}(V_{PCC}^{ref} - V_{PCC}) + \sum_{j\in\mathcal{N}_{MG}}a_{MG}^m(\frac{P_{MG}^j}{r_{MG}^j} - \frac{P_{MG}^m}{r_{MG}^m}),
\end{align}}where $V_{PCC}$ is the voltage measurement of common PCC, $P_{MG}^m$ is the output power of MG $m$, calculated as the product of PCC $m$'s voltage $V_{PCC_m}$ and output current $I_{MG}^m$, and $r_{MG}^m$ is the rated value of MG $m$'s output power. Parameters $b_{MG}^m$ and $a_{MG}^m$ are the {pinning} control gain and link weight between MGs $m$ and $j$, respectively. After integrating $\alpha^m_{MG}$, the reference voltage for PCC $m$, denoted by $V_{PCC}^{ref,m}$, is calculated from the subsequent {P--V droop-based} primary control as follows
\begin{align}\label{eq: MG primary control}
    V_{PCC}^{ref,m} = V_{MG}^{m*} - D_{MG}^mP_{MG}^m + \alpha_{MG}^{m},
\end{align}where $D_{MG}^m>0$ is the {droop control gain} of MG $m$ and $V_{MG}^{m*}$ is the rated output voltage of MG $m$. The reference PCC voltage $V_{PCC}^{ref,m}$ will be transmitted to the bottom DER controllers to achieve power sharing and PCC voltage regulation within MG $m$. In particular, the secondary controller of DER $i$ similarly adopts the pinning consensus control scheme, formulated as
{
% \small
\begin{align}
    % \text{Voltage:}\quad
    \dot{v}_{DER}^{i} &= b_{DER}^{i}(V_{PCC}^{ref,m} - V_{PCC}^m) + \nonumber \\
    & \quad + \sum_{j\in\mathcal{N}_{DER}^{MG_m}}a_{DER}^{ij,v}(v_{DER}^{j} - v_{DER}^{i})\\ 
    % \text{Power:}\quad
    \dot{p}_{DER}^{i} &= \sum_{j\in\mathcal{N}_{DER}^{MG_m}}a_{DER}^{ij,p}(\frac{P_{DER}^{j}}{r_{DER}^{j}} - \frac{P_{DER}^{i}}{r_{DER}^{i}}),
\end{align}}and the secondary control input is obtained by suming the above voltage ($v_{DER}^{i}$) and power ($p_{DER}^{i}$) related consensus variables as
\begin{align}
    \alpha_{DER}^{i} = v_{DER}^{i} + p_{DER}^{i},
\end{align}where $V_{PCC}^m$ is the voltage measurement of local PCC $m$ and $P_{DER}^{i}$ is the output power of DER $i$, calculated as the product of output voltage and current, i.e., $V_{DER}^{i}I_{DER}^{ti}$. Parameters $a_{DER}^{ij,v}$ and $a_{DER}^{ij,p}$ are the link weights, corresponding to the interaction of $v_{DER}^{j}$ and $p_{DER}^{j}$, respectively, between DERs $i$ and $j$, and parameter $b_{DER}^{i}$ is the pinning consensus control gain, being $1$ only if DER $i$ is selected as the pining node. Similarly, the reference value for DER $i$'s output voltage $V_{DER}^{ref,i}$ is then calculated from the P-V droop-based primary control, i.e.,
\begin{align}\label{eq: DER primary control}
    V_{DER}^{ref,i} = V_{DER}^{i*} - D^{i}_{DER}P_{DER}^{i} + \alpha_{DER}^{i},
\end{align}where $D^{i}_{DER}$ is the droop control gain of DER $i$'s control layer and $V_{DER}^{i*}$ is the rated output voltage of DER $i$. Based on voltage and current inner control loops, the DC-DC buck converter is regulated to track $V_{DER}^{ref,i}$, supplying local {constant impedance, constant current, and constant power (ZIP) loads} via a {resistor, inductor, and capacitor (RLC)} filter.

Based on the Kirchhoff current and voltage laws, the average dynamical model of a DC-DC buck converter in DER $i$ can be formulated as a linear time-invariant state-space model according to \cite{tan2015dc} as follows
\begin{align}\label{eq: DER SS Model}
\left\{
  \begin{array}{ll}
    \dot{\bm{x}}_i = A_i \bm{x}_i + B_iu_i + E_ig_i + \bm{\omega}_i \\
    \bm{y}_i = \bm{x}_i + \bm{\rho}_i
    \end{array}
\right.,
\end{align}where the system state vector $\bm{x}_i = [V_{DER}^{i}, I_{DER_{}}^{ti}]^{\rm T}$, system input $u_i = d_{DER}^{i}V_{RES}^{i}$, and $d_{DER}^{i}$ is the duty cycle obtained from the inner control loop. {The constant voltage source} $V_{RES}^{i}$ simplifies renewable energy sources (RESs) with sufficient storage capacity, and exogenous input $g_i = I_{DER}^{ZIP,i} + I_{DER_{i}}^{PCC_m}$ synthesises the current component of ZIP load within DER $i$ and the current flow from DER $i$ to PCC $m$. Moreover, $|\bm{\omega}_i| \le \bar{\bm{\omega}}_i$ and $|\bm{\rho}_i| \le \bar{\bm{\rho}}_i$ represent the process and measurement noises with known bounds, respectively, in the evolving dynamics. The system parameters $A_i, B_i$, and $E_i$ are expanded as
\begin{align}\label{eq: system parameters}
    A_i &= \begin{bmatrix}
-\frac{1}{Z_{DER}^{ZIP,i}C_{ti}} & \frac{1}{C_{ti}} \\
-\frac{1}{L_{ti}} & -\frac{R_{ti}}{L_{ti}}
\end{bmatrix}, 
B_i = \begin{bmatrix} 
0 \\
\frac{1}{L_{ti}},
\end{bmatrix},
E_i = \begin{bmatrix} 
-\frac{1}{C_{ti}} \\
0
\end{bmatrix},
\end{align}where $Z_{DER^{}}^{ZIP,i}$ is the resistance component of the ZIP load within DER $i$ and $R_{ti}, L_{ti}, C_{ti}$ are the RLC filter parameters. The continuous-time form \eqref{eq: DER SS Model} can be transformed into the following discrete-time form under sampling period $T_{samp}$
{\small\begin{align}\label{eq: Discrete DER SS Model}
\left\{
  \begin{array}{ll}
    {\bm{x}}_i(k+1) = A_{di} \bm{x}_i(k) + B_{di}u_i(k) + E_{di}g_i(k) + \bm{\omega}_i(k) \\
    \bm{y}_i(k) = \bm{x}_i(k) + \bm{\rho}_i(k)
    \end{array}
\right.,
\end{align}}where the discrete-time system parameters are derived from the continuous-time ones, i.e., $A_{di} = e^{A_iT_{samp}}$, $B_{di} = \int_{\tau = 0}^{T_{samp}}e^{A_i\tau}d\tau B_i$, and $E_{di} = \int_{\tau = 0}^{T_{samp}}e^{A_i\tau}d\tau E_i$. 

\subsection{UIO-based Attack Detection}
Since not all DERs have access to the exogenous input $g_i$, it is usually regarded as unknown, thus necessitating the design of UIO-based attack detector \cite{chen1996design}. To guarantee that the impact of unknown input $g_i$ can be {decoupled} when observing state $\bm{x}_i$ from measurement $\bm{y}_i$, the sufficient and necessary condition is that the transmission zeros from $g_i$ to $\bm{y}_i$ are stable, which is always satisfied as matrix $\begin{bmatrix} 
sI - A_{di} & E_{di} \\
I & 0
\end{bmatrix}$ is of full column rank for all $s$ with ${\rm Re}(s)\ge 0$ according to \eqref{eq: system parameters}. Then, the designed UIO follows
{
% \small
\begin{align}\label{eq: UIO model}
\left\{
  \begin{array}{ll}
    {\bm{z}}_i(k+1) = F_{di} \bm{z}_i(k) + T_{di}B_{di}u_i(k) + \hat{K}_{di}\bm{y}_i(k) \\
    \hat{\bm{x}}_i(k) = \bm{z}_i(k) + H_{di}\bm{y}_i(k)
    \end{array}
\right.,
\end{align}}where $\bm{z}_i$ is the internal UIO state and UIO parameters $F_{di}, T_{di}, \hat{K}_{di}$, and $H_{di}$ are selected according to
\begin{align}
    &(H_{di} - I)E_{di} = \bm{0} \label{eq: UIO parameter 1} \\
    &T_{di} = I - H_{di} \\
    &F_{di} = T_{di}A_{di} - \hat{K}_{di} + F_{di}H_{di}.
\end{align} 

According to the DER dynamics \eqref{eq: Discrete DER SS Model} and UIO model \eqref{eq: UIO model}, the residual $\bm{r}_i = \bm{y}_i - \hat{\bm{x}}_i$ that represents the discrepancy between the actual measurement and estimated state is derived as
\begin{align}\label{eq: detection residual}
    \bm{r}_i(k) &= (F_{di})^k \big(\bm{r}_i(0) - T_{di}\bm{\rho}_i(0)\big) + T_{di}\bm{\rho}_i(k) + \nonumber \\
    & + \sum_{l=0}^{k-1}(F_{di})^{k-l-1}\big(T_{di}\bm{\omega}_i(l) - \hat{K}_{di}\bm{\rho}_i(l)\big),
\end{align}which will asymptotically converge to a range close to $0$ when the eigenvalues of $F_i$ are within the unit circle. In normal cases, it is easy to find its upper bound to establish
\begin{align}\label{eq: upper bound}
    |\bm{r}_i(k)| \le \bar{\bm{r}}_i(k) &= \nu_i(\varsigma_i)^k\big|T_{di}\big|\bar{\bm{\rho}}_i+|T_{di}|\bar{\bm{\rho}}_i \nonumber\\
    &\sum_{l=0}^{k-1}\nu_i(\varsigma_i)^{k-1-l}\big(|T_{di}|\bar{\bm{\omega}}_i+|\hat{K}_{di}|\bar{\bm{\rho}}_i\big),
\end{align}where positive scalars $\nu_i$ and $0<\varsigma_i<1$ are chosen such that $||(F_{di})^k|| \le \nu_i(\varsigma_i)^k$. Once the breach of \eqref{eq: upper bound} is observed, i.e., 
\begin{align}\label{eq: anomaly detection}
    |r_{i,V}| > \bar{r}_{i,V} \quad {\rm or} \quad |r_{i,I}| > \bar{r}_{i,I},
\end{align}the alarm of data anomaly will be triggered. The above subscripts {$V$ and $I$} denote the voltage and current related residual variables, respectively.

\begin{figure}[!t]
    \centering
    \includegraphics[width=1\linewidth]{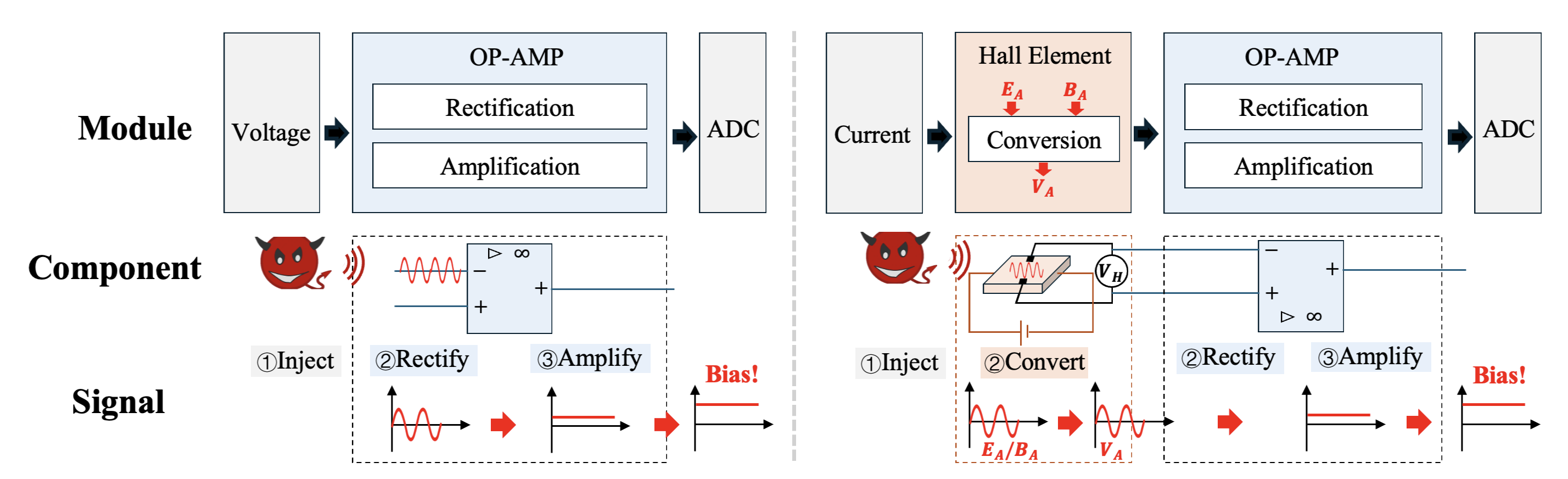}
    \caption{This figure illustrates the mechanisms of spoofing attacks against voltage and current sensors in DERs \cite{yang2024rethink}. {\color{black}
    % In the left part, the parasitic capacitance carried on the sensor's {printed circuit board (PCB)} couples the high-frequency electric fields resulting from EMI injections into the differential operational amplifier (OP-AMP). Then, the coupled interference is rectified and amplified by OP-AMP, eventually outputted as positive or negative bias on the voltage reading. In the right part, the current sensor includes not only the OP-AMP circuit but also a Hall element, which provides a new entrance for EMI injections. Specifically, the induced magnetic ($B_A$) or electric ($E_A$) field around the Hall chip is converted as voltage deviation $V_A$, which, after the process of OP-AMP circuit, is outputted as positive or negative bias on the current reading.
    }}
    \label{fig:sensorspoofing}
\end{figure}

\subsection{Sensor Spoofing Attack Model}
As shown in Fig. \ref{fig:sensorspoofing}, both voltage and current sensors within DERs are vulnerable to EMI injections due to the the electromagnetic sensitive components (parasitic capacitance) and process (Hall element) \cite{yang2024rethink}. {\color{black}The magnetic and electric fields resulting from EMI injections will be coupled into the sensor circuit, eventually outputted as positive or negative biases on sensor readings. 
Hall effect sensors have been widely adopted in converters and inverters for AC/DC voltage/current measurements, due to their guaranteed efficiency and accuracy under varying environments.
% \cite{ramsden2011hall}. 
The lack of security considerations in designing these sensors exposes grid-tied DERs to unconventional security concerns such as the carefully crafted EMI injection.
Therefore, the attacker is able to spoof the voltage/current sensor by crafting an intentional EMI and placing it near the sensor in a stealthy way, such as being camouflaged within a flower vase or coffee cup \cite{barua2020hall}.} 
Mathematically, 
% This paper considers the sensor spoofing attack that tamper with the local voltage and current sensor readings in the means of placing external magnetic field or compromising sensor firmware.
the sensor spoofing attack can be modelled as
\begin{align}
    \text{Voltage:}\quad V^{ia}_{DER} = V^i_{DER} + \phi^{V,i}_{DER}, \label{eq: spoofing attack V} \\ 
    \text{Current:}\quad I^{tia}_{DER} = I^{ti}_{DER} + \phi^{I,ti}_{DER}, \label{eq: spoofing attack I}
\end{align}where vector $\bm{\phi}_{DER}^i = [\phi_{DER}^{V,i}, \phi_{DER}^{I,ti}]^{\rm T}$ includes the spoofing biases injected into the voltage and current measurements. In addition to the measurement disruption capability, we assume that the attacker is intelligent enough and can spoof voltage and current sensors coordinately to mimic the DER dynamics \eqref{eq: Discrete DER SS Model}. 
% This objective can be achieved 
After obtaining system parameters $A_{di}, E_{di}$, the injected biases can be faked as ones resulting from the unknown input, i.e.,
\begin{align}\label{eq: stealthy attack vector}
    \bm{\phi}_{DER}^i (k+1) = A_{di}\bm{\phi}_{DER}^i(k) + E_{di} g_i^a(k), 
\end{align}where $g_i^a$ denotes the faked unknown input that is indistinguishable from $g_i$. The bias vector designed through \eqref{eq: stealthy attack vector}, along with the initial condition $\bm{\phi}_{DER}^i (k_a) = 0$, can easily bypass the UIO-based attack detector \eqref{eq: UIO model} without altering the detection residual. Mathematically, let $\bm{r}_i^a$ represent the residual under the spoofing attack; then {it follows the same detector residual dynamics as the normal case and remains within the detection bound, rather than being exactly zero in all practical noisy conditions} \cite{9793599}. 
% Although the adversary is assumed to have full knowledge of the DER electrical parameters, the coordination of sensor spoofing attacks against multiple DERs is difficult due to the

\section{Hierarchical Attack Detection and Impact Mitigation Framework}
As shown in the right part of Fig. \ref{fig:system-diagram}, it is considered that hardware shielding \cite{tu2021transduction} and filtering \cite{barua2022halc} have been deployed at the local PCC in MG to prevent intentional EMI injections from affecting these critical sensor readings. Based on {this secured information}, hierarchical defence strategies are designed with MG–DER coordination to mitigate sensor spoofing attacks. For the cross-layer proactive detection,
the MG layer anomaly detection is essentially a centralised verification for the data received from DER members by utilising the secured local PCC measurement and Kirchhoff current law.
% , and ZIP load information. 
Once {an} anomaly is observed, the RLC parameters within DERs will be strategically perturbed to cause {attack-defence information asymmetry}. Therefore, the UIO-based detector deployed in DERs could be able to perceive the carefully designed bias vectors \eqref{eq: stealthy attack vector}, 
% to expose the carefully crafted bias vector to the UIO-based detector, 
thus identifying the compromised DERs. The impact mitigation is discussed in three cases: 1) If the attack is detected after parameter perturbation, then a Kalman filter is adopted to recursively estimate the bias vector based on residual dynamics \eqref{eq: explicit residual bias relations} and attack generation scheme \eqref{eq: stealthy attack vector}; 2) When the attack is detected without perturbing parameters and the number of compromised DERs in a MG is equal to $1$, its legitimate voltage information is estimated from the Kirchhoff current law \eqref{eq: voltage reconstruction}, from which the current bias can be recursively estimated according to residual dynamics as \eqref{eq: recursive reconstruction form}; {3) When the attack is detected without perturbing parameters and the number of compromised DERs in a MG is more than $1$, the recursive estimation of sensor biases is impossible. To prevent the attack impacts from getting worse, the local control within these affected DERs will be disabled and the MG controller will take control of them.
% To mitigate and eliminate the attack impacts, 
}

% the identified malicious DERs will be selectively tripped to keep the number of compromised DERs at each local PCC not exceeding one. Then, the legitimate voltage and current measurements can be reconstructed from the Kirchhoff current law (MG layer) and UIO residual (DER layer), respectively, eventually achieving impact mitigation.

\subsection{Proactive Attack Detection}
As shown in the right upper part of Fig. \ref{fig:system-diagram}, the proactive attack detection synthesises the information at both MG and DER layers, corresponding to the centralised anomaly detection and decentralised source identification, respectively. The anomaly detected from the centralised MG viewpoint will trigger the perturbation on DER parameters within this MG such that the compromised DERs can be successfully identified.
% consists of 1) Current flow law based attack detection in the MG layer, and 2) Strategic parameter perturbation for attack identification within DERs, which are detailed in the subsequent part.

\subsubsection{Anomaly detection in the MG layer} 
To calculate the output power of MG $m$ and accomplish hierarchical control \eqref{eq: MG secondary}, \eqref{eq: MG primary control}, MG $m$ needs to measure PCC $m$'s voltage $V_{PCC}^m$ and MG's output current $I_{MG}^m$. These information, along with the DERs' output voltages, $V_{DER}^i, i \in \mathcal{N}_{DER}^{MG_m}$, motivates the Kirchhoff current law based verification at PCC $m$, i.e.,
{
% \small
\begin{align}\label{eq: current flow based verification}
    \tau_{PCC}^m &= I_{MG}^m - 
    % f^{ZIP}_{PCC_m}(V_{PCC}^m) + \nonumber \\ &\quad 
      \sum_{j\in\mathcal{N}_{DER}^{MG_m}}G_{DER_j}^{PCC_m}(V^j_{DER} - V_{PCC}^m),
\end{align}}where $\tau_{PCC}^m$ denotes the discrepancy between flowing-out and flowing-in currents at PCC $m$ 
% , function $f^{ZIP}_{PCC_m}(V_{PCC}^m) = V_{PCC}^m/Z_{PCC_m}^{ZIP} + I_{PCC_m}^{ZIP} + P_{PCC_m}^{ZIP}/V_{PCC}^m$ estimates the current flow to the ZIP load at PCC $m$, characterised by $Z_{PCC_m}^{ZIP}, I_{PCC_m}^{ZIP}, P_{PCC_m}^{ZIP}$. 
and $G^{PCC_m}_{DER_j}$ is the conductance of the power line connecting DER $j$ and PCC $m$. 
% The accuracy of estimating ZIP load current via $f_{PCC_m}^{ZIP}(\cdot)$ is guaranteed by advanced learning-driven load forecasting algorithms \cite{8743433}. 
To tolerate the measurement errors, a threshold $\bar{\tau}_{PCC}^m$ is predefined to bound $\tau_{PCC}^m$ in normal cases, i.e., $|\tau_{PCC}^m|\le \bar{\tau}_{PCC}^m$. Once
\begin{align}\label{eq: condition breach}
    |\tau_{PCC}^m| > \bar{\tau}_{PCC}^m,
\end{align}it is assumed that the received voltage $V_{DER}^j$ does not align with the Kirchhoff current law at PCC $m$, indicating that there might be issues with the received $V_{DER}^j, \forall j \in \mathcal{N}_{DER}^{MG_m}$ from DERs. 

{In extreme cases, when the adversary has full access to line parameter $G_{DER_j}^{PCC_m}$, then it is possible for the adversary to design {coordinated} bias injections such that $\sum_{j\in\mathcal{N}_{DER_a}^{MG_m}}G_{DER_j}^{PCC_m}\phi_{DER}^{V,j}
= 0$, where $\mathcal{N}_{DER_a}^{MG_m}$ is {the} set of compromised DERs. The {coordinated} bias injections can lead to {balanced} alterations of current flows, thus not varying $\tau_{PCC}^m$ and not enabling \eqref{eq: condition breach}. To address this limitation, apart from establishing \eqref{eq: condition breach}, other abnormal system behaviours such as large steady-state power sharing errors and swiftly varying output powers/voltages will also trigger the subsequent source identification in the DER layer.
% resistance perturbation.
% the periodical perturbation scheme will be established to handle the coordinated sensor spoofing attacks.
}

\subsubsection{Source identification in the DER layer} To identify the source of compromised DERs, the idea of proactive detection is adopted to expose the bias vector $\bm{\phi}_{DER}^i$ \eqref{eq: stealthy attack vector} that mimics the DER dynamics \eqref{eq: Discrete DER SS Model} to UIO-based detector \eqref{eq: UIO model}. The basic idea here is to perturb electrical parameters such as $R_{ti}, L_{ti}, C_{ti}$ to induce {attack-defence information asymmetry}, eventually increasing the UIO residual when the adversary uses outdated electrical parameters. Therefore, the sensitivity of UIO residual $\bm{r}_i$ with respect to parameter alterations {is} theoretically analysed to guide the parameter selection. 
\begin{Propos}\label{propos:residual sensitivity}
    Given that the sampling period $T_{samp}$ is small enough, the sensitivity of $\bm{r}_i^{\phi}$ with respect to the alterations of $R_{ti}, L_{ti}, C_{ti}, Z_{DER}^{ZIP,i}$ satisfy
    \begin{align}
        \frac{\partial \bm{r}_i^{\phi}}{\partial C_{ti}} &= \frac{\partial \bm{r}_i^{\phi}}{\partial Z_{DER}^{ZIP,i}} = 0 \label{eq: zero sensitivity} \\
        \Big|\frac{\partial \bm{r}_i^{\phi}}{\partial L_{ti}}\Big| &\varpropto \frac{R_{ti} + 1}{L_{ti}^2}|\bm{t}_{i2}|, \label{eq: inductance sensitivity} \\
        \Big|\frac{\partial \bm{r}_i^{\phi}}{\partial R_{ti}}\Big| &\varpropto \frac{1}{L_{ti}}|\bm{t}_{i2}|, \label{eq: resistance sensitivity}
    \end{align}where $\bm{r}_i^{\phi}$ denotes the residual variation when injecting bias vector $\bm{\phi}_i$ without recognising the parameter alterations, and $\bm{t}_{i2}$ is the second column vector of continuous-time UIO parameter $T_i$.
\end{Propos}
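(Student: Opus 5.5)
Our plan is to write the residual variation caused by a bias generated with outdated parameters explicitly, and then differentiate it with respect to each true parameter. Suppose the defender perturbs the true matrices to $A_{di}',E_{di}'$ and redesigns the UIO accordingly, while the attacker keeps generating $\bm{\phi}_{DER}^i$ through \eqref{eq: stealthy attack vector} with the old $A_{di},E_{di}$. Substituting $\bm{y}_i=\bm{x}_i+\bm{\rho}_i+\bm{\phi}_{DER}^i$ into \eqref{eq: UIO model} and using the constraints \eqref{eq: UIO parameter 1} and below, the attack-induced part of the residual obeys
\begin{align*}
\bm{r}_i^{\phi}(k+1)=F_{di}\bm{r}_i^{\phi}(k)+T_{di}\big(\bm{\phi}_{DER}^i(k+1)-A_{di}'\bm{\phi}_{DER}^i(k)\big),
\end{align*}
up to terms in $E_{di}'$ that are annihilated because $T_{di}E_{di}'=0$. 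Inserting \eqref{eq: stealthy attack vector} shows that the forcing term is $T_{di}\big((A_{di}-A_{di}')\bm{\phi}_{DER}^i(k)+E_{di}g_i^a(k)\big)$. When the parameters are unchanged, $A_{di}=A_{di}'$ and $T_{di}E_{di}=0$, so this reproduces the stealthiness claim of Section II.

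Next we use the small-$T_{samp}$ assumption to pass to the continuous-time matrices. We expand $A_{di}\approx I+A_iT_{samp}$ and $E_{di}\approx E_iT_{samp}$, with $T_{di}\approx T_i$ to leading order, where $T_i$ is the continuous-time UIO matrix satisfying $T_iE_i=0$. The forcing term then becomes, to first order,
\begin{align*}
T_{samp}\,T_i\big(\delta A_i\,\bm{\phi}_{DER}^i+E_i g_i^a\big)+T_{samp}\,(T_i'-T_i)E_ig_i^a ,
\end{align*}
where $\delta A_i=A_i-A_i'$. Therefore $\partial\bm{r}_i^{\phi}/\partial p$ is governed by $T_i\,(\partial A_i/\partial p)\,\bm{\phi}_{DER}^i$ filtered through the stable map $F_{di}$. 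This is the source of the proportionality sign in \eqref{eq: inductance sensitivity} and \eqref{eq: resistance sensitivity}, since the filter gain and the bias magnitude are common factors.

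We then read off each partial derivative from \eqref{eq: system parameters}. The entries involving $C_{ti}$ and $Z_{DER}^{ZIP,i}$ lie only in the first row of $A_i$, and $E_i$ is also supported on the first row. Because $T_iE_i=0$ and $E_i\propto[1,0]^{\rm T}$, the first column $\bm{t}_{i1}$ of $T_i$ vanishes. Hence $T_i\,\partial A_i/\partial C_{ti}=T_i\,\partial A_i/\partial Z_{DER}^{ZIP,i}=0$, which gives \eqref{eq: zero sensitivity}. The change of $E_i$ with $C_{ti}$ also lies in the span of $[1,0]^{\rm T}$, so it is annihilated as well.

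For $R_{ti}$, $\partial A_i/\partial R_{ti}$ has the single nonzero entry $-1/L_{ti}$ in position $(2,2)$. Therefore $T_i\,\partial A_i/\partial R_{ti}\,\bm{\phi}=-\bm{t}_{i2}\phi^{I}/L_{ti}$, giving \eqref{eq: resistance sensitivity}. For $L_{ti}$, the second row of $A_i$ has derivative $[1/L_{ti}^2,\ R_{ti}/L_{ti}^2]$. We also need to handle $B_i$, through the known input, which is accounted for by the defender. Multiplying by $\bm{t}_{i2}$ and bounding the bias components by a common magnitude gives a factor $(1+R_{ti})/L_{ti}^2$, hence \eqref{eq: inductance sensitivity}.

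We expect the main obstacle to be justifying that $T_i'$, the UIO redesigned for the new parameters, still annihilates the attacker's $E_i$ column, and that the discretisation cross terms are of higher order in $T_{samp}$. We would settle this first by noting that $E_i$ keeps its direction $[1,0]^{\rm T}$ under every perturbation considered.
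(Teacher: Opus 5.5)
Your proposal is correct and follows essentially the same route as the paper's proof: pass to the continuous-time matrices under small $T_{samp}$, reduce the attack-induced forcing term to $T_i$ times the parameter mismatch acting on $\bm{\phi}_i$ (the $E$-term vanishing by the UIO condition), use $T_i=[\bm{0},\bm{t}_{i2}]$ to kill the first-row effects of $C_{ti}$ and $Z_{DER}^{ZIP,i}$, and read off the $R_{ti}$ and $L_{ti}$ factors from the second row of $A_i$. The only differences are cosmetic: the paper fixes $\bm{\phi}_i=[1,1]^{\rm T}$ to get the $(R_{ti}+1)/L_{ti}^2$ factor where you bound the two bias components by a common magnitude, and you are somewhat more explicit than the paper about the redesigned UIO and the $B_i$ dependence.
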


The proof of Proposition \ref{propos:residual sensitivity} is 
% omitted due to space limitation and will be provided as required.
provided in {appendix \ref{appendix: proof of propos}} of supplementary material. 
Proposition \ref{propos:residual sensitivity} validates that there might emerge residual variations only when perturbing electrical parameters $R_{ti}, L_{ti}$, potentially exposing the carefully designed bias vector \eqref{eq: stealthy attack vector} to the UIO-based detector. In particular, analysis in \eqref{eq: residual approximation} shows that the residual variation responds more strongly to changes in inductance $L_{ti}$ than to changes in resistance $R_{ti}$. {This paper nevertheless adopts resistance perturbation as the practical implementation target because it can be realised by a retrofit impedance-selection branch at the converter output stage, whereas inductance perturbation normally requires modifying the converter magnetic component or adding switchable inductive branches \cite{9069415}. The adopted resistance perturbation is therefore not claimed to be the most sensitive option; rather, it provides a physically simple and experimentally verifiable way to create the required attack-defence information asymmetry.}

Next, the stability of system matrix $\widetilde{A}_i$ under resistance perturbation is analysed. Based on \eqref{eq: system parameters}, the eigenvalues can be explicitly calculated as
\begin{align}\label{eq: eigenvalues}
    \lambda_i^{1,2} = \frac{a_{11} + a_{22}}{2} \pm \frac{\sqrt{(a_{11} - a_{22})^2 + 4a_{12}a_{21}}}{2},
\end{align}where $a_{11} = -\frac{1}{Z_{DER}^{ZIP,i}C_{ti}}, a_{12} = \frac{1}{C_{ti}}, a_{21} = -\frac{1}{L_{ti}}$ are the three entries of matrix $\widetilde{A}_{i}$ that are independent of $R_{ti}$ and $a_{22} = -\frac{\widetilde{R}_{ti}}{L_{ti}}$ is the affected element by the alteration of resistance. Considering that $a_{11}<0$, $a_{22}<0$, and $a_{11}a_{22}-a_{12}a_{21}>0$, {the characteristic polynomial has positive coefficients, i.e.,}
{\begin{align}
    \lambda^2-(a_{11}+a_{22})\lambda+(a_{11}a_{22}-a_{12}a_{21})=0,
\end{align}}{which verifies that both real eigenvalues are negative or, when the roots form a complex-conjugate pair, their real part $(a_{11}+a_{22})/2$ lies in the left half plane. Therefore, $\widetilde{A}_i$ remains stable under resistance perturbation.}

Therefore, it is concluded that the resistance perturbation can amplify the detection residual in the presence of deceptive sensor spoofing attacks \eqref{eq: stealthy attack vector}, while assuring the stability of circuit dynamics in steady states. Note that there should be a limit on the perturbation frequency to prevent the perturbation from happening when steady states are not attained from the last perturbation. However, it is difficult to determine the {optimal perturbed resistance} such that the attack can be detected with the minimal power loss increase as the attack design variable $g_i^a$ is not known in prior. {\color{black}{Based on the experimental results in subsection \ref{sub:scenarioI}, a larger resistance perturbation generally improves the UIO residual amplification and the subsequent bias-estimation accuracy, but it also increases the power line loss. Therefore, a practical selection rule is to increase the resistance perturbation according to the available retrofit resistance branches, e.g., $R_{ti}\to 5R_{ti}\to 10R_{ti}$, until the detection and mitigation performance becomes acceptable. A rigorous determination of the optimal perturbation branch can be formulated as an optimisation problem that jointly considers detection enhancement, bias-estimation accuracy, power-loss-induced economic burden, and the available resistance branches of the chosen retrofit module, which will be investigated in our future work.}}

\subsection{Recursive Impact Mitigation}
After identifying the compromised DER sources, the following step is to {mitigate} the attack impact. Two attack scenarios are considered: 1) When the sensor bias vector $\bm{\phi}_{DER}^i$ is generated from \eqref{eq: stealthy attack vector}, the proposed strategy {recursively estimates the biases} based on the detection residual $\bm{r}_i^{\phi}$ and bias generation scheme \eqref{eq: stealthy attack vector}. {The bias state vector consists of the bias vector $\bm{\phi}_{DER}^i$ and the faked unknown input $g_i^a$. The residual function $\bm{r}_i^{\phi}$ is then treated as the measurement output, and the attack impact can be mitigated by estimating the bias state vector from this output via a Kalman filter observer.} 2) The second scenario is open to the more general ones where the sensor spoofing attack is detected without triggering the resistance perturbation. Herein the generated bias vector $\bm{\phi}_{DER}^i$ does not conform dynamics \eqref{eq: stealthy attack vector} and thus the impact mitigation requires extra information from the MG layer. If the number of compromised DERs in a MG is equal to $1$, then the Kirchhoff current law \eqref{eq: current flow based verification} can be used to estimate the legitimate voltage sensor reading of affected DER, eventually enabling the recursive estimation of legitimate current reading. When the number of compromised DERs in a MG is more than $1$, the recursive estimation of legitimate sensor readings would be impossible. In this case, the closed-loop control within compromised DERs will be disabled and the MG controller is going to take over the local controller for regulating the PWM duty ratio.

\subsubsection{\texorpdfstring{Scenario I: Sensor Spoofing Attack Generated from \eqref{eq: stealthy attack vector}}{Scenario I: Sensor Spoofing Attack Generated from the Stealthy Attack Vector}}
According to \eqref{eq: Discrete DER SS Model}, \eqref{eq: UIO model}, when neglecting system noises, there exist explicit relations between residual $\bm{r}_i^{\phi}$ and injected bias vector $\bm{\phi}^i_{DER}$, i.e., 
{\small\begin{align}\label{eq: explicit residual bias relations}
    \bm{r}_i^{\phi}(k+1) = \widetilde{F}_{di}\bm{r}_i^{\phi}(k) + \widetilde{T}_{di}\bm{\phi}_{DER}^i(k+1) - \widetilde{T}_{di}\widetilde{A}_{di}\bm{\phi}_{DER}^i(k),
\end{align}}where the upper sign $\widetilde{\cdot}$ denotes the parameter after resistance perturbation. Assuming that the faked unknown input $g_i^a$ used by the attacker varies slowly and can be described by $g_i^a(k+1) = g_i^a(k) + \omega_i^g(k)$, where $\omega_i^g(k)$ is the normal distribution noise. Then, the bias state vector can be created as $\bm{\psi}_i = [(\bm{\phi}_{DER}^i)^{\rm T}, g_i^a]^{\rm T}$, and let the observation output be $\bm{d}_i(k) = \bm{r}_i^{\phi}(k+1) - \widetilde{F}_{di}\bm{r}_i^{\phi}(k)$, thus the related dynamics are 
% described by
{
% \small
\begin{align}\label{eq: scenario I estimation}
\left\{
  \begin{array}{ll}
    \bm{\psi}_i(k+1) = F_{di}\bm{\psi}_i(k) + \bm{\omega}_i^{\psi}(k) \\
    \bm{d}_i(k) = H_{di}\bm{\psi}_i(k) + \bm{\rho}_i^{\psi}(k)
    \end{array}
\right.,
\end{align}}where the matrixes $F_{di}, H_{di}$ are
\begin{align}\label{eq: EstimationParameter}
    F_{di} &= \begin{bmatrix}
A_{di} & E_{di} \\
\bm{0}^{\rm T} & 1
\end{bmatrix}, 
H_{di} = \begin{bmatrix} 
\widetilde{T}_{di}(A_{di} - \widetilde{A}_{di}) & \widetilde{T}_{di}E_{di}
\end{bmatrix},
\end{align}and the system noises follow $\bm{\omega}_i^{\psi} \sim \mathcal{N}(0,Q), \bm{\rho}_i^{\psi} \sim \mathcal{N}(0, R)$. The observability of matrix pair $(F_{di}, H_{di})$ is the same as its continuous form $(F_i, H_i)$ if the imaginary difference of conjugate eigenvalues ($\pm i\zeta$) of $A_{i}$ does not lead to 
\begin{align}
\zeta T_{samp} \neq 2\pi m, m \in \mathbb{Z}.
\end{align}
According to \eqref{eq: eigenvalues}, the two eigenvalues will become conjugate as the increase of resistance ${R}_{ti}$. Nevertheless, given the millisecond sampling time $T_{samp}$, it is impractical to alter $R_{ti}$ to k$\Omega$-level. Therefore, the matrix pair $(F_{di}, H_{di})$ is observable akin to its continuous form $(F_i, H_i)$ as proved in {appendix \ref{appendix: observability proof}} of supplementary material.

The Kalman filter is adopted to achieve the accurate estimation of bias vectors while keeping robustness to system noises. When the malicious DERs are perceived at time $k_a$, the initial bias state vector is set as $\widehat{\bm{\psi}}_i^{k_a|k_a} = \bm{0}$ and the initial covariance $P_i^{k_a|k_a}$ should be large enough to indicate large initial uncertainties. Then, the recursive estimation process at $k\ge k_a$ follows
{\small\begin{align}
&\text{Prediction}: \widehat{\bm{\psi}}_i^{k+1|k} = F_{di} \widehat{\bm{\psi}}_i^{k|k}, P_i^{k+1|k} = F_{di}P_i^{k|k}F_{di}^{\rm T} + Q, \\
&\text{Innovation residual}: \bm{\delta}_i^{k+1} = \bm{d}_i^{k+1} - H_{di}\widehat{\bm{\psi}}_i^{k+1|k},\\
& \text{Innovation covariance}: S_i^{k+1} = H_{di}P_i^{k+1|k}H_{di}^{\rm T} + R, \\
& \text{Kalman gain}: K_i^{k+1} = P_i^{k+1|k}H_{di}^{\rm T}(S_i^{k+1})^{-1}, \\
& \text{Update estimate}: \widehat{\bm{\psi}}_i^{k+1|k+1} = \widehat{\bm{\psi}}_i^{k+1|k} + K_i^{k+1}\bm{\delta}_i^{k+1}, \\
& \text{Update covariance}: P_i^{k+1|k+1} = (I-K_i^{k+1}H_{di})P_i^{k+1|k}.
\end{align}}It is noted that the time label is put in the right upper of variable for space saving. The variance of output noise, resulting from the original system noises in \eqref{eq: Discrete DER SS Model}, is characterised by diagonal matrix $R$. The variance of process noise is characterised by diagonal matrix $Q$, where the third diagonal entry should be larger than the former two entries to capture the dynamics of $g_i^a$. The increase of noise variance results in faster tracking of $g_i^a$ but noisier estimates. 

The estimated sensor biases are directly {subtracted} from the compromised measurement vector $\bm{y}_i^a$ to achieve impact mitigation:
\begin{align}\label{eq: impact mitigation}
    \bm{y}_i^{rec}(k) = \bm{y}_i^a(k) - \Phi\widehat{\bm{\psi}}_i(k), k\ge k_a,
\end{align}where $\bm{y}_i^{rec}$ denotes the recovered measurements incorporated into control loops and the selection matrix $\Phi = {\rm diag}([1,1,0])$.

\subsubsection{\texorpdfstring{Scenario II: Sensor Spoofing Attack not Satisfying \eqref{eq: stealthy attack vector}}{Scenario II: Sensor Spoofing Attack not Satisfying the Stealthy Attack Vector}}
When the sensor spoofing attack does not satisfy the deceptive generation scheme \eqref{eq: stealthy attack vector}, there would be no sufficient measurements to observe the sensor bias vector. If the number of affected DERs in the MG equalise to $1$, then the Kirchhoff current law at PCC $m$ \eqref{eq: current flow based verification} can be used to estimate its legitimate voltage information as
{\begin{align}\label{eq: voltage reconstruction}
    \widehat{V}_{DER}^{i} &= V_{PCC}^m + R_{DER_i}^{PCC_m} \Big(I_{MG}^m +
    % + f_{PCC_m}^{ZIP}(V_{PCC}^m) + 
    \nonumber \\ 
    &
    - \sum_{j\in\mathcal{N}_{DER}^{MG_m} / \mathcal{N}_{DER_a}^{MG_m}}G_{DER_j}^{PCC_m}\big(V_{DER}^j - V_{PCC}^m\big)\Big),
\end{align}}where $R_{DER_i}^{PCC_m}$ denotes the resistance of power line connecting DER $i$ and PCC $m$. After reconstructing the voltage information $V_{DER}^{i,rec}$ in the MG layer, it will then be transmitted to DER $i$ to complete the reconstruction of legitimate current information. Then, the voltage bias injection can be assumed known from $\widehat\phi_{DER}^{V,i} = V_{DER}^{ia} - \widehat{V}_{DER}^{i}$. According to \eqref{eq: explicit residual bias relations}, the recursive estimation scheme is formulated as
\begin{align}\label{eq: recursive reconstruction form}
    &\widehat{\phi}_{DER}^{I,i} (k+1) = (\bm{t}_{di2}^{nor})^{\rm T}T_{di}A_{di}\widehat{\bm{\phi}}_{DER}^{i}(k) + \nonumber \\
    & - (\bm{t}_{di2}^{nor})^{\rm T}\bm{t}_{di1}\widehat{\phi}_{DER}^{V,i} +  (\bm{t}_{di2}^{nor})^{\rm T}(\bm{r}_i^{\phi}(k+1) - F_{di}\bm{r}_i^{\phi}),
\end{align}where $\widehat{\bm{\phi}}_{DER}^{i} = [\widehat{{\phi}}_{DER}^{V,i}, \widehat{{\phi}}_{DER}^{I,ti}]^{\rm T}$ denotes the estimated bias vector of $\bm{\phi}_{DER}^i$, $\bm{t}_{di2}^{nor} = \frac{\bm{t}_{di2}}{(\bm{t}_{di2})^{\rm T}\bm{t}_{di2}}$, and $T_{di} = [\bm{t}_{di1}, \bm{t}_{di2}]$. Note here that the altered parameters are not used as the sensor spoofing attack not satisfying \eqref{eq: stealthy attack vector} can be easily detected without triggering perturbation. The recursive estimation scheme \eqref{eq: recursive reconstruction form} is proved to be stable under bounded system noises as well as uncertainties of initial states, electrical parameters, and renewable intermittency as the first entry of row vector $(\bm{t}_{di2}^{nor})^{\rm T}T_{di}A_{di}$ always stays within the unit circle. Eventually, the estimated bias vector $\widehat{\bm{\phi}}_{DER}^{i}$ will be subtracted from the affected measurements to accomplish impact mitigation similar to \eqref{eq: impact mitigation}.

If the number of compromised DERs in a MG is more than $1$, it would be essentially difficult to recursively estimate the sensor biases due to the loss of observability. In {this} case, the closed-loop control within DERs is not reliable and will be disabled. {The MG controller will take over to regulate the PWM duty {cycles} within these compromised DERs according to the voltages of legitimate DERs. The overarching goal is to reduce the power sharing imbalance among all DERs, while the accurate power sharing is hard due to the uncertainties of local ZIP loads within DERs.} In experimental studies, the performance of holding and adaptive strategies that regulate the PWM duty {cycles} of affected DERs are demonstrated and compared.

{\color{black}\noindent\textbf{Remark 1:} Although the present hierarchical detection and mitigation framework is developed for networked DC microgrids, the basic idea of hierarchical detection and mitigation can be extended to AC microgrids and hybrid AC/DC microgrids. Such an extension requires a dedicated small-signal system model for AC microgrids in the $dq$ frame, with the impact of the $abc/dq$ coordinate transformation explicitly incorporated into the spoofing attack propagation and closed-loop stability analysis \cite{sahoo2021ac,yan2019smallsignal}. In particular, different inverter control methods, such as grid-following and grid-forming, can be integrated into the small-signal dynamics, but their synchronization mechanisms, control-loop interactions, and stability contributions should be modelled in a coordinated manner \cite{9796617,li2022duality}. For hybrid AC/DC microgrids, the handling of the DC-AC interface requires explicit modelling of the interlinking converter together with the coupled AC-side and DC-side dynamics. Existing studies show that unified linearised state-space models can be derived for specific hybrid AC/DC microgrid architectures when the AC subgrid, the DC subgrid, the interlinking converter, and the associated communication or control loops are modelled jointly; however, the resulting formulation remains topology-dependent and control-dependent rather than universally transferable across all hybrid configurations \cite{liu2011hybrid,chang2021interlinking,yoo2020hybrid,espina2021hybrid}. Moreover, the global monitoring metrics at the microgrid layer need specific redesign to timely capture adversarial impacts on voltage, frequency, phase-angle, power-sharing, and interlinking-converter dynamics, while the perturbation of electrical parameters should additionally account for the transient-stability constraints of low-inertia renewable-dominated power systems \cite{he2022lowinertia}. When the MG-side electrical topology changes from a radial structure to a meshed one, the MG-layer reconstruction in \eqref{eq: voltage reconstruction} should also be reformulated according to the corresponding network nodal equations, such that the voltage-bias information required by the DER-layer recursive mitigation in \eqref{eq: recursive reconstruction form} can still be provided. This topology-dependent reformulation, together with the above AC and hybrid AC/DC extensions, is beyond the scope of this paper and will be investigated in our future work. It is also noted that the scalability of the proposed method is strong because the DER-layer implementation mainly requires local linear observers and recursive estimators, without involving complex data-processing modules or high-dimensional learning models.}

{\color{black}\noindent\textbf{Remark 2:} {The proactive detection mechanism in this work relies on perturbing circuit-side electrical parameters to create information asymmetry against attack biases that are aligned with the pre-perturbation circuit dynamics. Therefore, purely software-side retuning, such as modifying control gains {\cite{mokhtar2019adaptiveDroop}} or virtual-impedance parameters {\cite{wu2017virtualImpedance}} only, is not sufficient because it does not directly alter the physical consistency relation between voltage and current measurements. From an implementation viewpoint, this perturbation can be retrofitted to an already deployed converter by adding a compact impedance-selection module at the converter output stage, where calibrated resistance branches are switched after receiving the perturbation signal from the MG controller. Such a hardware pathway is consistent with the reconfigurable and auxiliary-circuit design philosophy, where additional reconfigurable branches or plug-play auxiliary cells are integrated into converter hardware without redesigning the entire power stage \cite{abramson2018reconfigurable}. As a concrete low-voltage example, a 504W commercial converter such as the MEAN WELL SD-500L-48 is listed at \$157.7, while a retrofit branch based on one 20A DC contactor and one 0.1$\Omega$/50W chassis resistor requires about \$24.8 of additional hardware cost at listed distributor prices accessed on April 24, 2026 \cite{digikey_sd500l48,digikey_aev20eb,digikey_hs50r1j}. Hence, the proposed method does not require redesigning the original converter control stack, but can be integrated through a compact add-on hardware module with bounded cost and limited installation effort.}}

\begin{figure}[!h]
    \centering
    \includegraphics[width=1\linewidth]{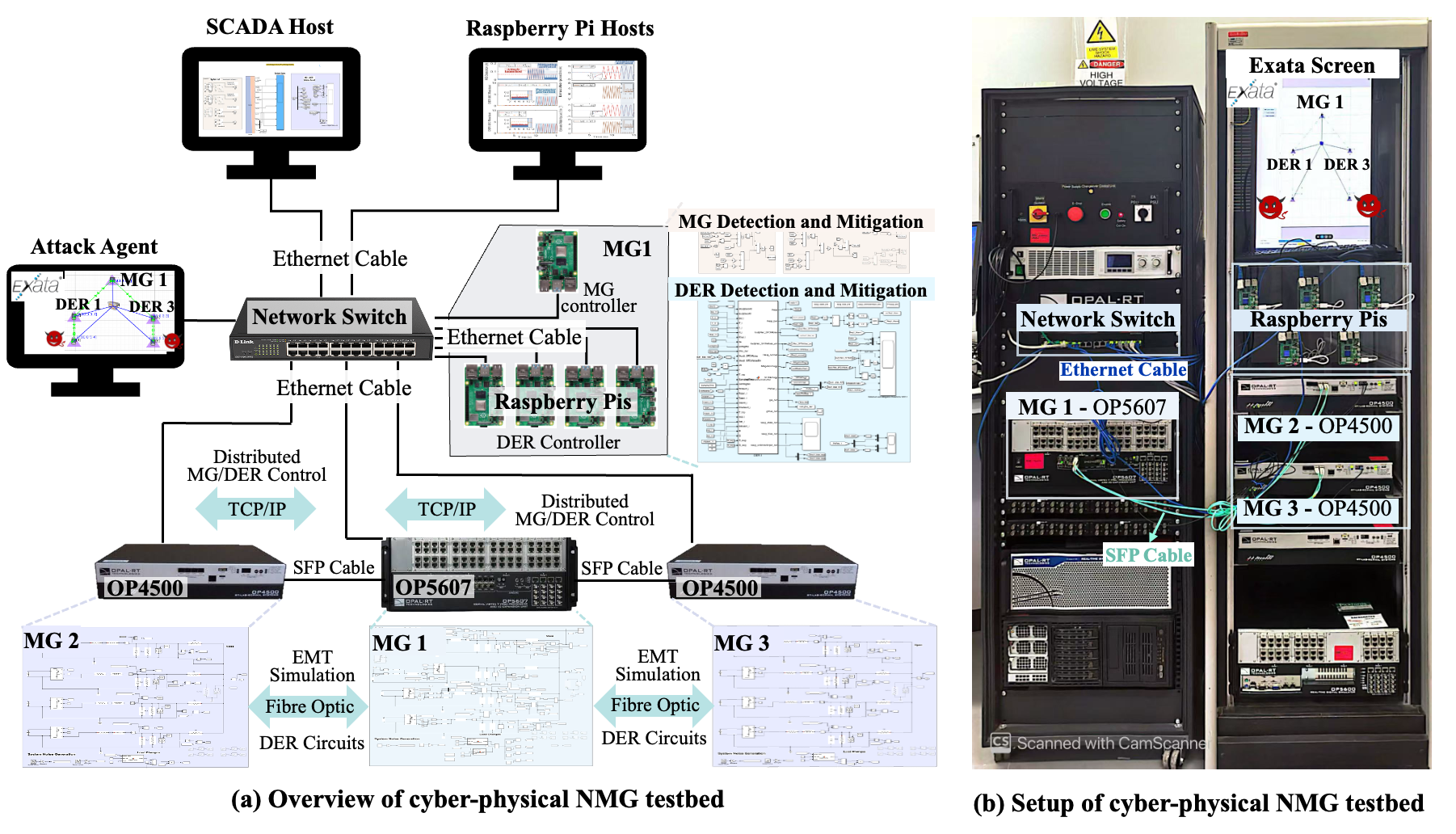}
    \caption{\color{black}This figure presents the cyber-physical NMG testbed used for validation, consisting of one OPAL-RT OP5607 and two OP4500 real-time simulators, Keysight EXata communication emulator, and Raspberry Pis as co-simulation interface.}
    \label{fig:experimentaltestbed}
\end{figure}

\section{Experimental Results}
In this section, we conduct experiments in a {cyber-physical} NMG testbed that integrates real-time power and communication simulators, as shown in Fig. \ref{fig:experimentaltestbed}, to validate the performance of proposed hierarchical defence framework. The cyber and physical topologies are shown in Fig. \ref{fig:system-diagram}. In particular, the EMT dynamics of three MGs are simulated in one OPAL-RT OP5607 and two OP4500 real-time simulators, where the interconnection between MGs is modelled as equivalent circuits with data transmitted via {fibre optic (SFP) cables}. The distributed MG/DER control signals are interacted through an Ethernet TCP/IP communication network. In MG 1, the MG and DER controllers of DERs 1 and 3 are implemented in three Raspberry Pis to test the performance of proposed hierarchical defence framework. Moreover, two extra Raspberry Pis are used as data transfer stations, which, together with existing three Raspberry Pis, are mapped into Exata as five virtual nodes. Then, the cyber library of EXata can be applied to implementing realistic data modification attacks against the mapped sensor links, mimicking the data manipulation effect of sensor spoofing attack \eqref{eq: spoofing attack V}, \eqref{eq: spoofing attack I}. The performance of proposed hierarchical framework is shown under two scenarios according to if attack vectors are generated from \eqref{eq: stealthy attack vector}.

\begin{figure}[!h]
    \centering
    \includegraphics[width=1\linewidth]{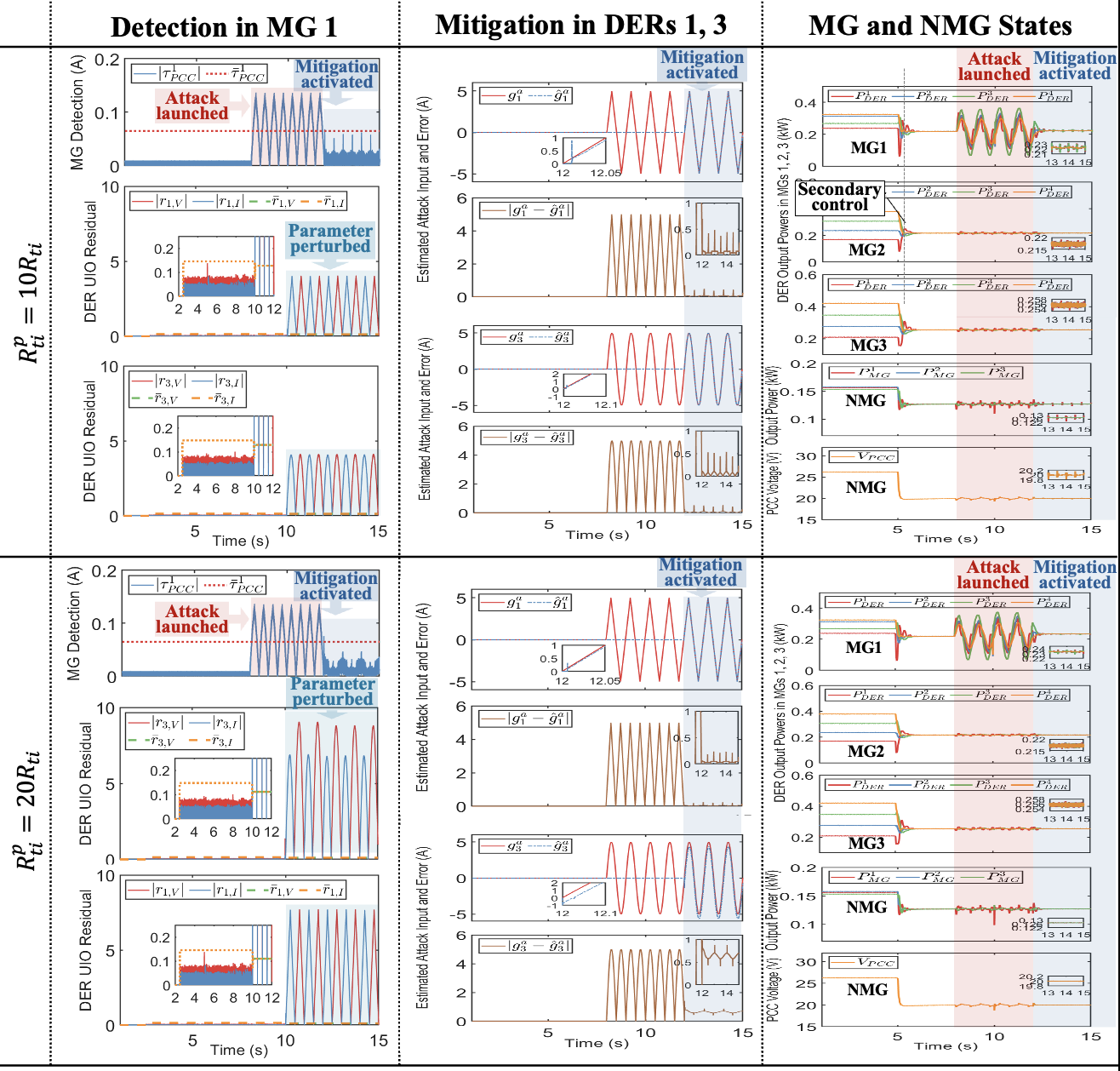}
    \caption{This figure shows the performance of proposed hierarchical detection and mitigation framework under intelligent sensor spoofing attacks (in DERs $1$ and $3$) satisfying \eqref{eq: stealthy attack vector}.
    % varying perturbation strengths. 
    In particular, two perturbation strengths on resistance $R_{ti}$ (i.e., $10R_{ti}$ and $20R_{ti}$) are selected, corresponding to two rows from top to bottom. In each case, the detection and mitigation metrics in both MG and DER layers are showcased in the first two columns, and the system states under the hierarchical framework are illustrated in the last column. The timeline of main events is: 1) $t=5$s, secondary control activated; 2) $t=8$s, attack launched; 3) $t=10$s, perturbation activated; 4) $t=12s$, mitigation activated, which are highlighted with different background colors.
    }
    \label{fig:CaseI_StealthyAttackVector}
\end{figure}

{\color{black}To clarify the selection of the MG-layer detection threshold, Table \ref{tab:falsealarm_threshold} reports the false alarm rates under different electrical-parameter uncertainty ratios and different thresholds $\bar{\tau}_{PCC}^m$. Here, the uncertainty ratio represents the normal mismatch of electrical parameters, such as the equivalent parameter $1/G_{DER_j}^{PCC_m}$ used in the MG-layer detector \eqref{eq: current flow based verification}. In the false-alarm analysis, this electrical-parameter uncertainty is generated according to a normal distribution, and the listed uncertainty ratios specify the corresponding standard-deviation levels relative to the nominal electrical parameter. A $5\%$ electrical-parameter variation is used to represent practical electrical-parameter drift induced by environment-temperature changes or aging-related resistance mismatch \cite{liu2019droop, qin2025adaptive}. System noises with bounds $\bar{\bm{\omega}}_i=\bar{\bm{\rho}}_i=0.01[1,1]^{\rm T}$ are integrated into the experiments. The last column gives the maximal power-sharing error that can remain below the corresponding threshold, which represents the maximal missed attack impact from the perspective of power-sharing imbalance. The results show a clear trade-off: a smaller threshold improves detection sensitivity and reduces the maximal missed attack, but it also significantly increases the false alarm rate when the electrical-parameter uncertainty ratio becomes larger. For example, when $\bar{\tau}_{PCC}^m=0.02$, the maximal missed power-sharing error is only $2$W, but the false alarm rate increases from $10.02\%$ to $70.97\%$ as the electrical-parameter uncertainty ratio increases from $0.01$ to $0.05$. In contrast, $\bar{\tau}_{PCC}^m=0.2$ almost eliminates false alarms across all uncertainty ratios, but allows up to $9$W missed power-sharing error. Therefore, the threshold $\bar{\tau}_{PCC}^m=0.065$ used in the experiment provides a compromise between robustness and sensitivity, keeping the false alarm rate low under mild electrical-parameter uncertainty while limiting the missed power-sharing error to $3.5$W.}

\begin{table}[!h]
    \centering
    \caption{{\color{black}False alarm rate and maximal missed attack impact under different MG-layer thresholds and electrical-parameter uncertainty ratios.}}
    \label{tab:falsealarm_threshold}
    {\color{black}
    \footnotesize
    \setlength{\tabcolsep}{2pt}
    \renewcommand{\arraystretch}{1.12}
    \resizebox{\linewidth}{!}{%
    \begin{tabular}{c c c c c c c}
        \toprule[1pt]
        \textbf{$\bar{\tau}_{PCC}^m$} & \multicolumn{5}{c}{\textbf{\shortstack[c]{False alarm rate under uncertainty ratios of $1/G_{DER_j}^{PCC_m}$}}} & \multirow{2}{*}{\textbf{\shortstack[c]{\\~Maximal missed attack impact\\on power-sharing error (W)}}} \\
        \cmidrule(lr){2-6}
        & \textbf{$0.01$} & \textbf{$0.02$} & \textbf{$0.03$} & \textbf{$0.04$} & \textbf{$0.05$} & \\
        \midrule
        $0.02$  & $10.02\%$ & $36.54\%$ & $53.80\%$ & $64.20\%$ & $70.97\%$ & $2$ \\
        $0.065$ & $6.25{\times}10^{-4}\%$ & $0.34\%$ & $4.66\%$ & $13.16\%$ & $22.60\%$ & $3.5$ \\
        $0.1$   & $0$ & $1.875{\times}10^{-3}\%$ & $0.23\%$ & $2.10\%$ & $6.36\%$ & $5$ \\
        $0.2$   & $0$ & $0$ & $0$ & $1.875{\times}10^{-3}\%$ & $2.5{\times}10^{-3}\%$ & $9$ \\
        \bottomrule[1pt]
    \end{tabular}}}
\end{table}

\subsection{\texorpdfstring{Proactive Detection and Recursive Impact Mitigation under Intelligent Sensor Spoofing Attacks Satisfying \eqref{eq: stealthy attack vector}}{Proactive Detection and Recursive Impact Mitigation under Intelligent Sensor Spoofing Attacks Satisfying the Stealthy Attack Vector}}\label{sub:scenarioI}

The intelligent sensor spoofing attack satisfying \eqref{eq: stealthy attack vector} can bypass the original UIO-based detector, and thus requires to perturb resistance $R_{ti}$ for effective detection and mitigation.
As shown in Fig. \ref{fig:CaseI_StealthyAttackVector}, the performance of proposed hierarchical framework under two perturbation strengths on resistance $R_{ti}$ (i.e., $10R_{ti}$ and $20R_{ti}$) are demonstrated.
% The attack detection and impact mitigation performance of proposed hierarchical framework are shown in Fig. \ref{fig:CaseI_StealthyAttackVector}, where two perturbation strengths on resistance $R_{ti}$ (i.e., $10R_{ti}$ and $20R_{ti}$) are introduced to showcase their impacts on the defence performance.
% The three perturbations include $R_{ti}^p = 3R_{ti}$, $10R_{ti}$, $20R_{ti}$, where $R_{ti}^p$ denotes the DER circuit resistance after perturbation. The results are divided into detection, mitigation, and system states to clearly show the performance from different angles. 
The event timeline is: 1) At $t=5$s, the secondary control at both DER and MG layers are activated; 2) At $t=8s$, two sensor spoofing attacks satisfying \eqref{eq: stealthy attack vector} are launched against DERs 1 and 3 with triangle and sine fake unknown inputs ($g_i^a, i \in \{1,3\}$), respectively; 3) At $t=10$s, resistance perturbations are applied to all DERs within MG 1; 4) At $t=12$s, impact mitigation scheme \eqref{eq: impact mitigation} is enabled.
% estimated bias vector $\widehat{\bm{\psi}}_i$ is subtracted from 
% DER $3$ is tripped as it has larger rated output power than DER $1$. 
The sampling period $T_{samp} = {4\times10^{-4}}$s. The subsequent contents follow to explain the observed {phenomena}.

\subsubsection{Proactive Attack Detection}
The detection results include the anomaly metric $\tau_{PCC}^1$ based on the Kirchhoff current law in MG 1 and UIO residuals $\bm{r}_1$, $\bm{r}_3$. 
% Here, the detection threshold for $\tau_{PCC}^1$ is set as $\bar{\tau}_{PCC}^1 = 0.065$A to tolerate the measurement noise. 
According to the first column of Fig. \ref{fig:CaseI_StealthyAttackVector}, both $\tau_{PCC}^1$ and $\bm{r}_i$ are bounded by corresponding thresholds until the launch of spoofing attacks at $t=8$s, indicating that the applied detection metrics are insensitive to the activation of secondary control. During $t\in[8,10]$s, the launched attacks immediately trigger the MG detection alarm \eqref{eq: condition breach} but can {bypass} the UIO-based detector \eqref{eq: UIO model} as the applied attack dynamics \eqref{eq: stealthy attack vector} align with the DER dynamics \eqref{eq: Discrete DER SS Model}. When the resistance perturbations are applied, the affected DERs 1 and 3 can be quickly identified as the related UIO residuals exceed the thresholds swiftly, validating the improved detection capability by perturbing $R_{ti}$. The {increase of resistance perturbation from $10R_{ti}$ to $20R_{ti}$ amplifies} the detection residuals under attacks, which further {improves} the mitigation performance as seen on $\tau_{PCC}^1$ after the estimated bias vector $\widehat{\bm{\psi}}_i$ is adopted.

\subsubsection{Recursive Impact Mitigation} The impact mitigation performance is shown via the estimated fake unknown input $g_i^a, i \in \{1,3\}$ in the second column of Fig. \ref{fig:CaseI_StealthyAttackVector}. Since the increase of resistance perturbation will amplify detection residuals, a relatively smaller bias injection can exceed the detection threshold, thereby triggering the bias estimation. In such case, the spikes on the reconstruction errors resulting from insufficient bias injections can be significantly decreased accordingly. Another phenomenon is that, for DER $3$, a steady-state estimation error emerges after $20R_{ti}$ perturbation, while no steady-state estimation error is observed in DER $1$. The difference between DERs 1 and 3 attributes to the variation of local ZIP loads, which may make the steady state before perturbation cannot be explained the DER dynamics with perturbed $R_{ti}$.
% The resistance perturbation strength will not affect the reconstruction of voltage bias, which, akin to the MG detection, is directly related to the measurement noise \eqref{eq: voltage reconstruction}. On the contrary, the increasing perturbation strength would lead to a improved reconstruction of current bias. Specifically, larger resistance perturbations can yield a smaller steady-state reconstruction error, as the increased UIO residual more accurately reflects the bias information \eqref{eq: recursive reconstruction form}. 
% Moreover, a larger UIO residual allows for a smaller minimal bias injection to exceed the detection threshold, thereby triggering bias reconstruction. Consequently, the spikes in reconstruction error become more negligible.

\subsubsection{System States}
The system states, consisting of the DER output powers, MG output powers, and PCC voltage, under the proposed hierarchical framework are shown in the third column of Fig. \ref{fig:CaseI_StealthyAttackVector}. When looking into the details within MGs and of NMG, the attack impact can be largely mitigated after activating the impact mitigation scheme \eqref{eq: impact mitigation}.
In particular, the larger resistance perturbation on $R_{ti}$ can lead to smoother system states after impact mitigation because of the improved estimation accuracy. Moreover, the larger perturbation will induce more intense transient state fluctuations but can quickly converge to expected trajectories owing to the stable eigenvalues \eqref{eq: eigenvalues}.
At the same time, the increased resistance $R_{ti}$ will slightly improve the steady-state output power of DER 1 within MG 1 under the same ZIP loads. {This improvement should be interpreted together with the additional power loss caused by resistance perturbation, which motivates future optimisation of the perturbation ratio rather than treating larger resistance as universally preferable.}
% As shown in Fig. \ref{fig:powerloss} (an independent study without involving attacks), the increase of perturbation ratio $R_{ti}^p/R_{ti}$ leads to linearly increasing power line loss, which is calculated as the variation of power consumption on $R_{ti}$ before and after perturbation, denoted by $\Delta P_{MG}^{1,loss} = \sum_{i=1}^4\big((I_{DER}^{ti,ssp})^2R_{ti}^p - (I_{DER}^{ti,ss})^2R_{ti}\big)$, with $I_{DER}^{ti,ss}$ and $I_{DER}^{ti,ssp}$ denoting the steady-state output currents of DER $i$ in MG 1 before and after perturbation. 
{In a nutshell, the result exemplifies that resistance perturbation can successfully identify the compromised DERs and largely mitigate the attack impacts in cyber-physical NMGs, while the associated power-loss cost should be balanced against the detection and mitigation improvement.}

{\color{black}To further verify the feasibility of the proposed framework beyond the buck-converter cases, DER 2 in MG 1 is reconfigured as a boost converter and the same defence strategy is deployed on it, while DERs 1 and 3 remain compromised buck converters. Different from the LTI model in \eqref{eq: DER SS Model}, the averaged dynamics of boost-converter DER 2 become a linear time-varying state-space system because the duty cycle $d_i$ varies with the operating status, i.e.,
\begin{align}
% \dot{\bm{x}}_i &= A_i(d_i)\bm{x}_i + B_i u_i + E_i g_i + \bm{\omega}_i, \quad \bm{y}_i = \bm{x}_i + \bm{\rho}_i, \nonumber \\
A_i(d_i) &= \begin{bmatrix}
-\frac{1}{R_{Li}C_{ti}} & \frac{1-d_i}{C_{ti}} \\
-\frac{1-d_i}{L_{ti}} & -\frac{R_{ti}}{L_{ti}}
\end{bmatrix}.
\end{align}
As a result, the generated bias waveforms are no longer strictly sinusoidal because the duty-cycle-dependent boost-converter dynamics introduce extra nonlinearity into the closed-loop behaviour (seen in Fig. \ref{fig:boostconverterloadvariation}).}
{\color{black}To handle this time-varying nonlinearity, the instantaneous duty cycle is treated as a known scheduling variable, based on which the discrete boost-converter model is updated online at each sampling instant. Accordingly, the UIO-related matrices are recalculated with the current duty cycle, such that the observer parameters remain matched to the converter operating point rather than being fixed at a single linearisation condition. In the same manner, the Kalman filter parameters for bias estimation are synchronously refreshed from the duty-cycle-dependent state-space model, enabling the estimator to track the time-varying residual-bias relation more accurately. The proposed framework can be applied to buck-boost converters in a similar way by formulating their averaged converter dynamics as duty-cycle-dependent state-space models and then constructing the UIO and Kalman-filter-based bias estimator accordingly \cite{tan2015dc}.}

\begin{figure}[!h]
    \centering
    \includegraphics[width=0.5\linewidth]{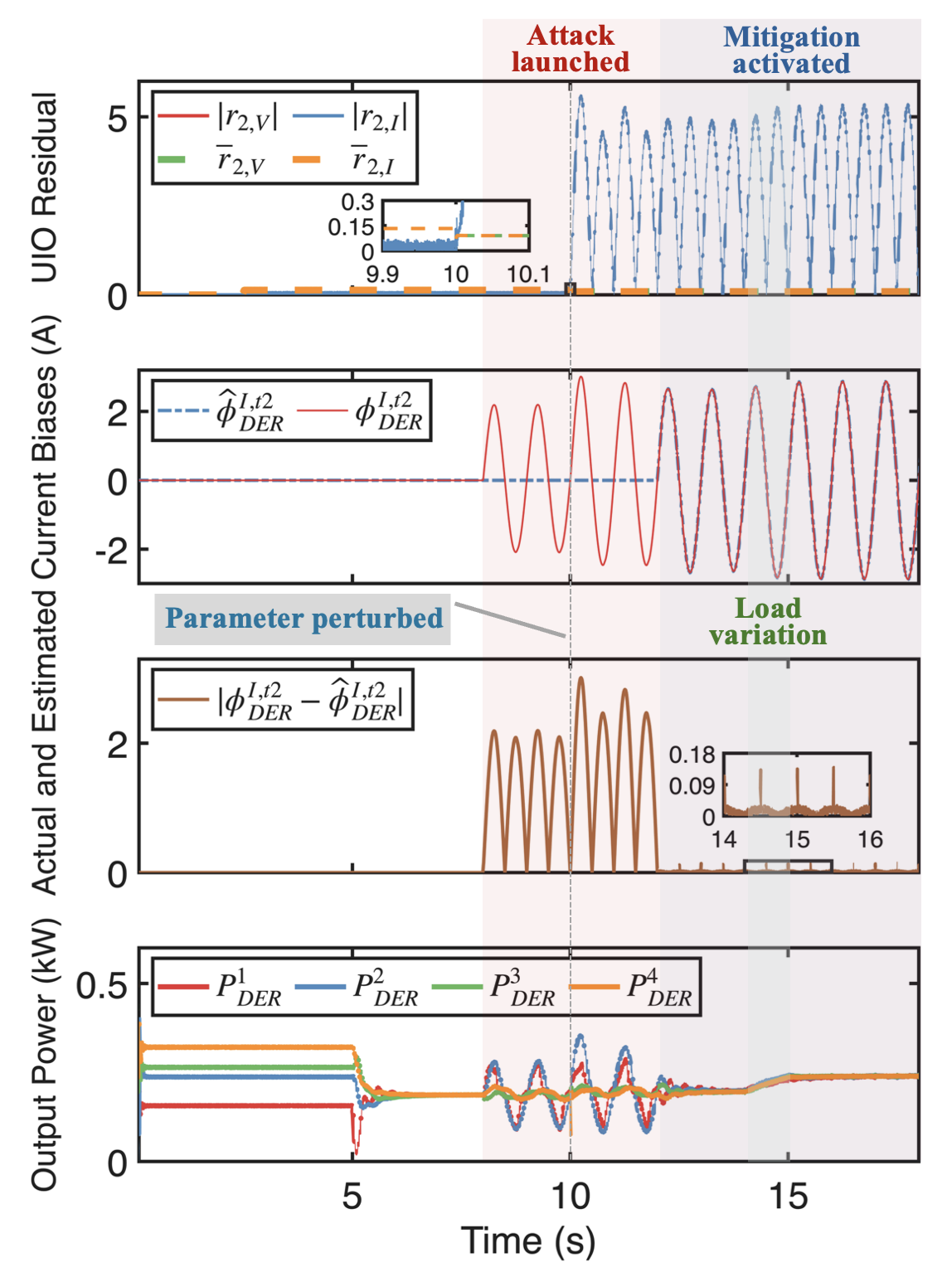}
    \caption{{\color{black}This figure shows the effectiveness of the proposed defence framework when DER 2 in MG 1 is configured as a boost converter. 
    % Three DERs 1, 2, and 3 are compromised, where DERs 1 and 3 are buck converters and DER 2 is the boost converter. The stealthy attack is perceived by the UIO residual after the defence is activated at $t=10$s, and the injected current bias in the boost converter is accurately estimated by the Kalman filter despite the duty-cycle-dependent converter dynamics. During $t\in[14,15]$s, the load currents of all DERs increase linearly by $25\%$, showing that the proposed method remains compatible with load variations treated as unknown inputs. The voltage-bias estimation behaves similarly and is omitted for brevity.
    }}
    \label{fig:boostconverterloadvariation}
\end{figure}

{\color{black}The corresponding results are shown in Fig. \ref{fig:boostconverterloadvariation}. At $t=10$s, the UIO detection residual clearly perceives the stealthy attack and enables the Kalman-filter-based estimator to accurately reconstruct the injected current bias even in the presence of the boost-converter-induced time-varying dynamics. This indicates that updating the UIO and Kalman filter parameters with the duty cycle can effectively accommodate the nonlinearity induced by boost-converter operation. Moreover, the steady-state current-bias estimation error is smaller than $0.18$A, which {substantially reduces} the attack impact on the system states. During the load variation interval $t\in(14,15)$s, all DER load currents are increased linearly by $25\%$. The result shows that this operating change can still be well handled by the UIO because it is regarded as an unknown input, and the proposed defence remains compatible with such load variations without losing attack-detection and mitigation capability.}

\begin{figure}[!t]
    \centering
    \includegraphics[width=0.85\linewidth]{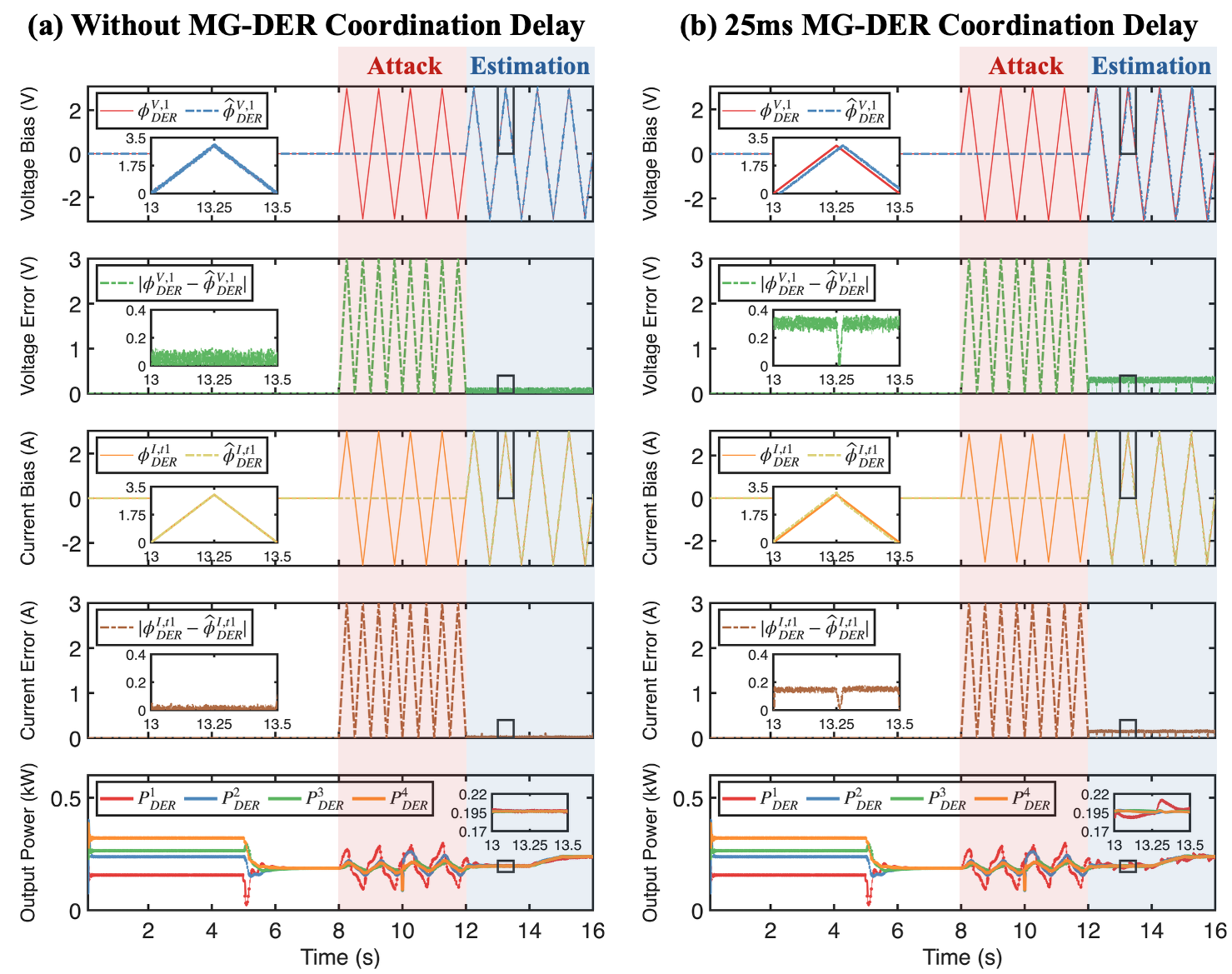}
    \caption{{\color{black}This figure shows the performance of the proposed hierarchical detection and mitigation framework under the sensor spoofing attack not satisfying \eqref{eq: stealthy attack vector}. 
    % Two MG-DER coordination settings are compared: ideal coordination without delay and coordination with a $25$ms delay. In each column, the first and second subplots show the voltage bias reconstruction and its absolute estimation error, the third and fourth subplots show the current bias reconstruction and its absolute estimation error, and the last subplot shows the output powers of all DERs in MG 1 under the mitigation scheme.
    }}
    \label{fig:CaseII_GeneralAttackVector}
\end{figure}

\subsection{\texorpdfstring{Impact Mitigation under Sensor Spoofing Attacks not Satisfying \eqref{eq: stealthy attack vector}}{Impact Mitigation under Sensor Spoofing Attacks not Satisfying the Stealthy Attack Vector}} 
When the sensor spoofing attack do not satisfy \eqref{eq: stealthy attack vector}, it can be perceived by the UIO-based detector \eqref{eq: UIO model} without resistance perturbation. The mitigation strategies differ depending on the number of affected DERs and their performance will be validated via two cases.

\begin{figure}[!t]
    \centering
    \includegraphics[width=0.65\linewidth]{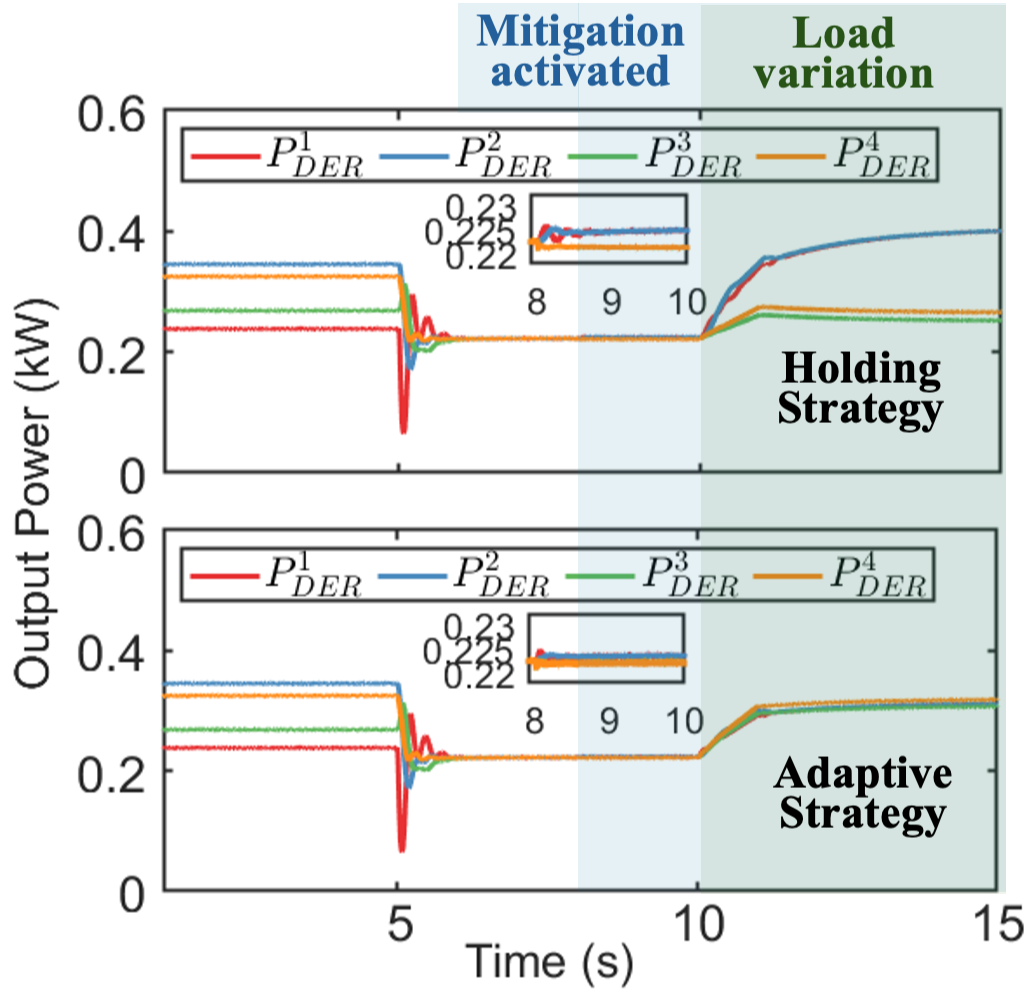}
    \caption{This figure shows the performance of holding and adaptive mitigation strategies when DERs 1 and 3 are under the sensor spoofing attacks not satisfying \eqref{eq: stealthy attack vector}. 
    % The attack is launched at $t=8$s and the mitigation is activated immediately after detecting anomaly. During $t\in[10,11]$s, the linearly increasing load variations are introduced to the local and PCC ZIP loads.
    }
    \label{fig:caseii_morethan1DER}
\end{figure}
\subsubsection{\texorpdfstring{Recursive impact mitigation with $1$ affected DER}{Recursive impact mitigation with one affected DER}}
If only DER $1$ within MG $1$ is affected, then its legitimate voltage can be estimated in the MG layer via \eqref{eq: voltage reconstruction}. Based on that, its normal current can be recursively estimated in the DER layer through \eqref{eq: recursive reconstruction form}. {\color{black}Therefore, the bias estimation and impact mitigation performance are illustrated in Fig. \ref{fig:CaseII_GeneralAttackVector}, where ideal MG-DER coordination and $25$ms delayed MG-DER coordination are compared. The event timeline here is the same as that in the previous subsection. Under ideal coordination, the post-mitigation voltage- and current-bias estimation errors over $t\in[12,16]$s are bounded by approximately $0.124$V and $0.119$A, with mean errors of $0.044$V and $0.011$A, respectively. When a $25$ms MG-DER coordination delay is introduced, the corresponding maximum errors increase to approximately $0.382$V and $0.176$A, and the mean errors increase to $0.295$V and $0.143$A, respectively. The output-power responses also show a larger average DER power-spread after mitigation, increasing from about $0.0048$kW under ideal coordination to $0.0090$kW with the $25$ms delay. These results indicate that MG-DER coordination delay mainly degrades the recursive bias reconstruction accuracy and slightly worsens the post-mitigation power-sharing behaviour, while the proposed mitigation still keeps the system states bounded and substantially improved compared with the attacked interval before mitigation. It is noted that the investigation of delay-tolerant mitigation schemes that explicitly compensate MG-DER coordination latency will be left as future work.}
\subsubsection{\texorpdfstring{Open-loop central regulation with more than $1$ affected DERs}{Open-loop central regulation with more than one affected DER}}
This subsection considers that DERs $3$ and $4$ are affected by the sensor spoofing attacks not satisfying \eqref{eq: stealthy attack vector}. These attacks can be effectively detected by the UIO-based detector \eqref{eq: UIO model}, but their recursive estimation faces challenges due to insufficient observability.
{\color{black}The rationale is that, when multiple DERs are compromised and no further implication on the bias-injection design is available, fully {removing} all attack impacts through recursive bias estimation becomes challenging. Instead of pursuing exact bias compensation under insufficient observability, the central idea is to fall back from the local closed-loop DER control to open-loop central MG control. This fallback ceases the propagation of attack impacts at the compromised DERs by preventing falsified local measurements from continuously entering their closed-loop controllers, at the expense of sacrificing part of the global control performance such as power sharing.} Therefore, the demonstrated mitigation first disables the local closed-loop control within affected DERs immediately after perceiving anomalies at time instant $k_a$, and implements the open-loop central regulation of PWM duty {cycles} within these DERs. The adopted strategies include
% \small
\begin{align}
    \text{Holding:}\ \ d_{DER}^i (k) &= d_{DER}^i (k-1) \label{eq: holding strategy} \\
    \text{Adaptive:}\ \ d_{DER}^i (k) &= \big(1 + \kappa_i(k)\big) d_{DER}^i (k) \label{eq: adaptive strategy}
\end{align}where $\kappa_i(k) = \sum_{j\in\mathcal{N}_{DER}^{MG_1}/\mathcal{N}_{DER_a}^{MG_1}} \frac{V_{DER}^j(k) - V_{DER}^j(k-1)}{\big|\mathcal{N}_{DER_a}^{MG_1}\big|V_{DER}^j(k-1)}$ calculates the regulation ratio on PWM duty {cycle} according to the voltage variation trend of legitimate DERs. As shown in Fig. \ref{fig:caseii_morethan1DER}, the holding strategy \eqref{eq: holding strategy} can prevent the attack impact from getting worse and propagating to other normal DERs, while showing limitations in {pursuing} power sharing among DERs especially under load variation. On the other hand, the adaptive strategy \eqref{eq: adaptive strategy} can track the voltage variation trend of legitimate DERs within MG $1$, therefore decreasing the power sharing error. However, the transient and steady-state performance of intelligent regulation strategies such as \eqref{eq: adaptive strategy} especially when considering heterogeneous converter dynamics {requires} further analytical investigation and {is} left for future {work}.

% In practice, the computation periods in DER and MG layers could have discrimination,different from the previous cases assuming that the MG and DER mitigation periods are the same $T_{MG}^{miti} = T_{DER}^{miti} = 200\mu$s. Therefore, the impact of asynchronous cross-layer mitigation is investigated by varying $T_{MG}^{miti}$ from $10$ms to $1000$ms, which align with realistic scenarios where the communication delay is not negligible. According to Fig. \ref{fig:asynchronousmitigation}, the increase of $T_{MG}^{miti}$ significantly decreases the accuracy of reconstructed voltage bias as the update frequency severely lags behind the bias dynamics \eqref{eq: stealthy attack vector}. However, the accuracy of reconstructed current bias is surprisingly good despite the inaccurate voltage bias information. This phenomenon occurs because the actual current bias is over ten times greater than the actual voltage bias, enabling the reconstructed current bias to maintain high accuracy even when assuming zero voltage bias and attributing the entire UIO residual to the current bias. Therefore, the attack impact can be largely mitigated even with the worst $T_{MG}^{miti} = 1$s. This study verifies the practical value of proposed framework, showcasing its satisfactory performance despite non-negligible communication delays between DERs and MGs.

\begin{figure}[!h]
    \centering
    \includegraphics[width=0.85\linewidth]{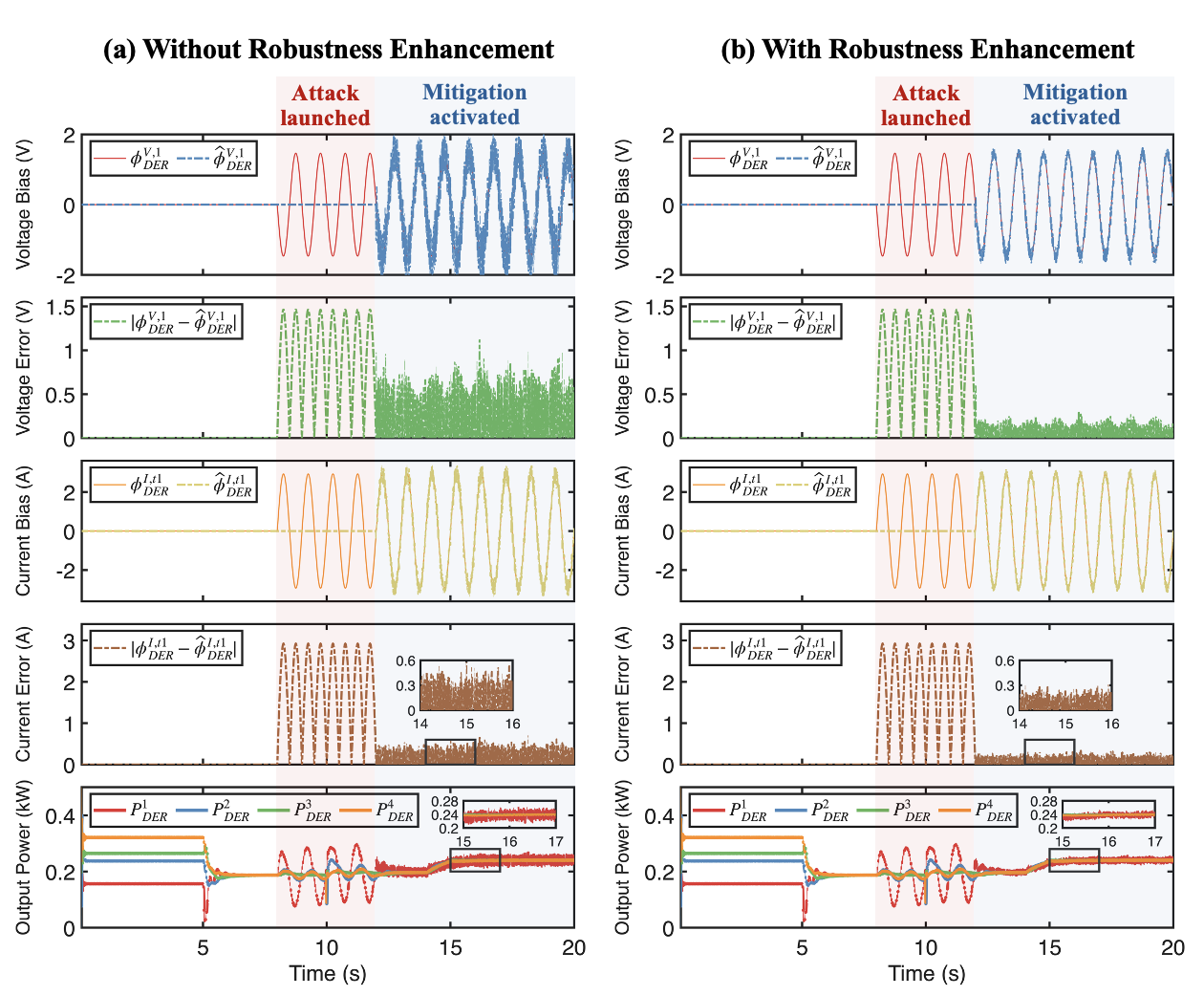}
    \caption{{\color{black}This figure shows the robustness of the proposed detection and mitigation framework under electrical-parameter variations, where the electrical parameters $R_{t1}$, $C_{t1}$, $L_{t1}$ and the equivalent ZIP load $Z_{DER}^{ZIP,1}$ are perturbed by approximately $10\%$ normal-distribution variations in the residual calculation and bias estimation. 
    % The proposed method remains stable under these uncertainties, while nontrivial voltage- and current-bias estimation errors are induced by the parameter mismatch. After the initial transient, the robust UIO design keeps the post-transient voltage- and current-bias estimation errors within approximately $0.590$V and $0.340$A, respectively, and improves the power-output response in MG 1.
    }}
    \label{fig:robustnesstoparametersensitivity}
\end{figure}

{\color{black}\subsection{Sensitivity to Parameter Uncertainties and Enhancement}
The above results are obtained under the nominal electrical parameters used in the residual calculation and bias estimation. In practice, however, electrical parameters may deviate from their nominal values due to temperature, aging, load changes, and modelling mismatch \cite{liu2019droop, qin2025adaptive}. To evaluate this impact, the electrical parameters $R_{t1}$, $C_{t1}$, $L_{t1}$ and the equivalent ZIP load $Z_{DER}^{ZIP,1}$ are injected with approximately $10\%$ variations generated from a normal distribution when calculating the UIO residuals and estimating the injected biases. As shown in Fig. \ref{fig:robustnesstoparametersensitivity}, the proposed framework remains stable under these electrical-parameter uncertainties and the attack impacts can still be mitigated. Nevertheless, the parameter mismatch introduces nontrivial bias-estimation errors, especially in the current-bias channel, because the residual-to-bias reconstruction relation is directly affected by the uncertain converter dynamics.}

{\color{black}To enhance robustness against such parameter sensitivities, the UIO parameters are redesigned by following the robust UIO philosophy in \cite{gao2016robustuio}, where the unknown inputs that can be structurally decoupled are eliminated by the UIO condition, while the remaining disturbance components caused by parameter mismatch are attenuated through an optimisation criterion. Specifically, for each uncertainty vertex $\vartheta\in\mathcal{V}$ generated from the normal-distribution range of electrical parameters, the discrete DER model $(A_{di}(\vartheta),B_{di}(\vartheta),E_{di}(\vartheta))$ is reconstructed and the UIO matrices are parameterised as
\begin{align}
    T_{di}(\vartheta) &= \begin{bmatrix} g_1 \bm{n}_{di}^{\rm T}(\vartheta) \\ g_2 \bm{n}_{di}^{\rm T}(\vartheta) \end{bmatrix}, \quad
    \bm{n}_{di}^{\rm T}(\vartheta)E_{di}(\vartheta) = 0, \nonumber\\
    H_{di}(\vartheta) &= I - T_{di}(\vartheta), \nonumber\\
    \hat{K}_{di}(\vartheta) &= T_{di}(\vartheta)A_{di}(\vartheta) - F_{di}T_{di}(\vartheta),
\end{align}
where $\bm{n}_{di}(\vartheta)$ spans the left nullspace of the unknown-input vector $E_{di}(\vartheta)$. This construction guarantees $T_{di}(\vartheta)E_{di}(\vartheta)=0$ for the decouplable unknown input under every sampled uncertainty vertex. The remaining free variables are $\theta=\{g_1,g_2,F_{di}\}$, where $F_{di}$ is allowed to be a full $2\times2$ Schur matrix instead of being restricted to the diagonal form used in the nominal UIO design.}

\begin{figure}[!h]
    \centering
    \includegraphics[width=0.85\linewidth]{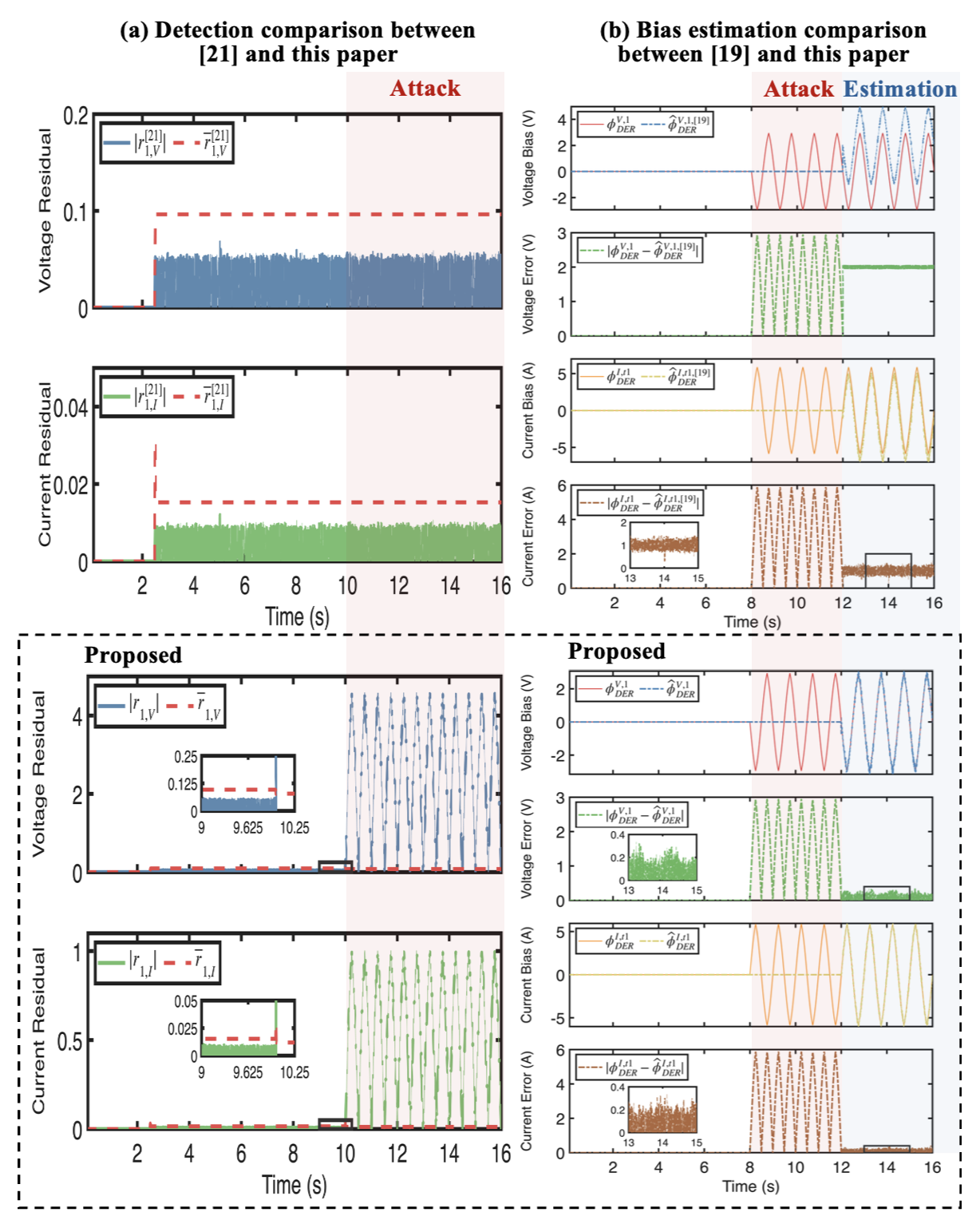}
    \caption{{\color{black}Performance comparison with representative existing detection and mitigation methods. The first column compares the proposed proactive detection with the watermarking-based detection method in \cite{10643207}; the second column compares the bias-estimation performance with the recursive mitigation method in \cite{10746504}. 
    % For the bias-estimation panel, the first and second subplots show the actual/reconstructed voltage bias and the absolute voltage-bias estimation error, while the third and fourth subplots show the corresponding current-bias results.
    }}
    \label{fig:performancecomparison}
\end{figure}

\begin{figure}[!h]
    \centering
    \includegraphics[width=1\linewidth]{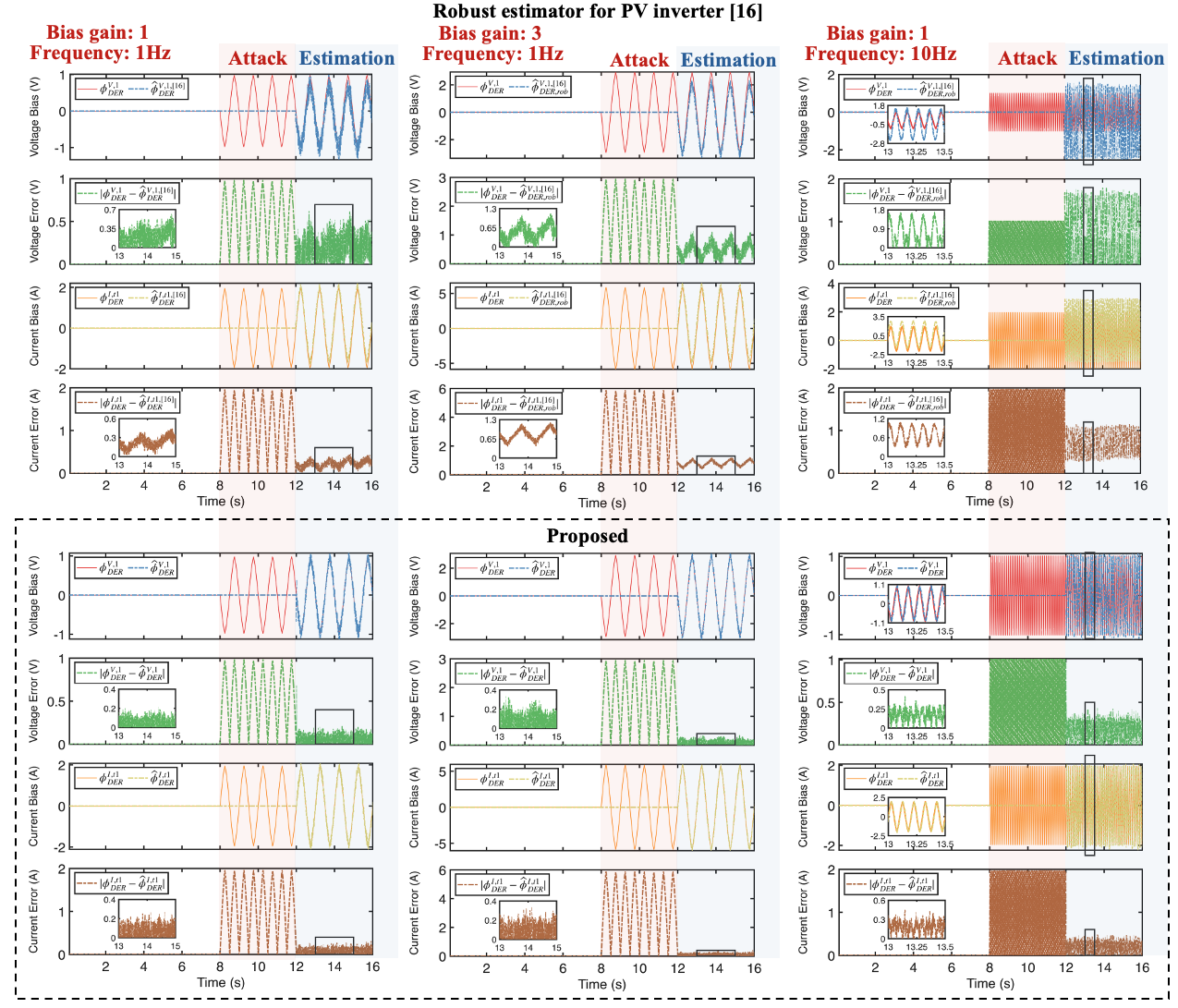}
    \caption{{\color{black}Performance comparison with the robust PV-inverter-style estimator in \cite{10643338} under time-varying spoofing biases with different gains and frequencies. For each bias-estimation panel, the first and second subplots show the actual/reconstructed voltage bias and the absolute voltage-bias estimation error, while the third and fourth subplots show the corresponding current-bias results. The results show that the proposed framework better supports sensor-spoofing mitigation in networked DC microgrids by combining circuit-side perturbation, UIO-based residual generation, and recursive bias reconstruction.}}
    \label{fig:performancecomparison2}
\end{figure}

{\color{black}The robust UIO design is therefore formulated as the following constrained min-max optimisation problem:
\begin{align}
    \min_{\theta} \quad & \max_{\vartheta\in\mathcal{V}} \gamma(\theta,\vartheta) \nonumber\\
    {\rm s.t.}\quad
    & \rho(F_{di}) < \bar{\rho}, \quad
    \|T_{di}(\vartheta)E_{di}(\vartheta)\|_2 \le \epsilon_d,
\end{align}
where $\rho(F_{di})$ enforces the observer stability and $\epsilon_d$ enforces the UIO decoupling condition. The optimisation objective $\gamma(\theta,\vartheta)$ is the finite-horizon robustness metric
\begin{align}
    \gamma(\theta,\vartheta) &= \sqrt{\lambda_{\max}(\Gamma_N(\theta,\vartheta))}, \nonumber\\
    \Gamma_N(\theta,\vartheta) &= \sum_{k=0}^{N}F_{di}^kT_{di}T_{di}^{\rm T}(F_{di}^k)^{\rm T}.
\end{align}
\noindent This metric quantifies the amplification from parameter-mismatch-induced residual disturbances to the estimation error over the uncertainty set $\mathcal{V}$. In the implementation, the above problem is solved by a multi-start derivative-free search. This solver is adopted because the free-structure parameterisation introduces nonlinear spectral-radius and eigenvalue-dependent terms, while the design dimension is low for the considered two-state DER model. A dedicated convex or LMI-based solver that can provide stronger global optimality guarantees will be investigated in future work. With the original UIO parameters, the worst-case metric is $\gamma=14.2134$. After applying the robust UIO design, this value is reduced to $\gamma=2.8995$, corresponding to a $79.6\%$ reduction. This confirms that the robust design significantly attenuates the influence of electrical-parameter variations while preserving the exact decoupling of the unknown input. In Fig. \ref{fig:robustnesstoparametersensitivity}, after the initial detection and reconstruction transient, the robust design confines the voltage-bias estimation error within approximately $0.590$V and the current-bias estimation error within approximately $0.340$A after $t=12$s. Without the robust redesign, the corresponding post-transient error ranges increase to approximately $1.112$V and $0.679$A, respectively. Therefore, the robust UIO design reduces the nontrivial estimation-error envelope induced by electrical-parameter mismatch while keeping the closed-loop power outputs bounded and stable.}

{\color{black}\subsection{Performance Comparison with Existing Methods}
To further clarify the distinction between the proposed framework and representative existing methods, Figs. \ref{fig:performancecomparison} and \ref{fig:performancecomparison2} compare both the detection performance and the bias-estimation performance under sensor spoofing attacks. In Fig. \ref{fig:performancecomparison}, the first column shows that the physics-aware watermarking detector in \cite{10643207} does not trigger under the considered sensor spoofing attack, while the proposed detector identifies the attack immediately after the defence is activated at $t=10$s. The reason is that the spoofing biases are injected before data transmission and are therefore carried together with the watermarked measurements, so the added and removed watermarks can remain mutually consistent. By contrast, the proposed method perturbs the circuit-side electrical relation, proactively forcing the injected biases to deviate from the perturbed physical dynamics, thereby enlarging UIO residuals.}

{\color{black}The second column of Fig. \ref{fig:performancecomparison} shows that the recursive impact mitigation method in \cite{10746504} cannot accurately reconstruct simultaneous voltage and current spoofing biases in the present DC microgrid case. In the bias-estimation setting, the sensor spoofing attack starts at $t=8$s, the DER-layer resistance is perturbed at $t=10$s, and the recursive bias reconstruction starts at $t=12$s. In the upper panels, the voltage and current error subplots remain non-negligible after the attack starts, showing that simultaneous bias reconstruction is inaccurate. In contrast, the lower panels show that the proposed method reduces both voltage- and current-bias estimation errors after the reconstruction is activated; in the zoomed interval $t\in[13,15]$s, both the voltage- and current-error axes are within $[0,0.4]$V and $[0,0.4]$A, respectively. This improvement is obtained because the DER-layer resistance perturbation changes the residual-bias relation and restores sufficient information for recursive bias reconstruction. }

{\color{black}Fig. \ref{fig:performancecomparison2} compares the proposed bias estimation with the robust estimation and defence strategy in \cite{10643338} under time-varying spoofing biases with different gains and frequencies, corresponding to the last three columns of the original comparison figure. The PV-inverter-style robust estimator is adapted to the DC microgrid replay data by constructing a one-step converter residual from the measured voltage/current, load current, neighbouring-DER current, duty cycle, and nominal electrical parameters. Then, a residual-to-bias calibration is introduced to approximately convert the voltage/current residual information into the corresponding spoofing-bias estimates. The error subplots show that the robust estimator can follow part of the bias trend, especially in the lower-frequency cases, but the voltage- and current-bias estimation errors increase when the gain or frequency changes. In the zoomed windows, the robust-estimator error axes expand from approximately $[0,0.7]$V and $[0,0.6]$A in the gain-$1$, $1$Hz case to about $[0,1.3]$V/A in the gain-$3$, $1$Hz case, and further to about $[0,1.8]$V and $[0,1.2]$A in the gain-$1$, $10$Hz case. Quantitatively, over the zoomed interval $t\in[13,15]$s, the robust estimator gives mean/maximum voltage errors of $0.194/0.611$V, $0.491/1.113$V, and $0.747/1.780$V in the three cases, respectively; the corresponding current errors are $0.222/0.438$A, $0.795/1.184$A, and $0.720/1.135$A. By comparison, the proposed method keeps the zoomed post-transient error windows much smaller, mostly within $[0,0.4]$V/A and only enlarged to approximately $[0,0.6]$V/A in the high-frequency case. Its mean/maximum voltage errors in the same three cases are $0.040/0.205$V, $0.068/0.329$V, and $0.173/0.426$V, while the current errors are $0.056/0.264$A, $0.072/0.328$A, and $0.180/0.469$A. This smaller error envelope is achieved because the proposed method integrates the circuit perturbation, UIO residual dynamics, and recursive bias dynamics associated with the hierarchical MG-DER control architecture. It is noted that this comparison does not diminish the value of \cite{10643338}, which has been validated for PV inverter applications, but highlights the additional modelling requirements for mitigation in networked DC microgrids.}

\section{Conclusion}
This paper proposes a hierarchical attack detection and impact mitigation framework to assure the safe operation of NMGs under sensor spoofing attacks, where only the sensors at local PCCs require physical preventive protection. In particular, the data verification at MG layer, centring on the Kirchhoff current law at local PCC, acts as a trigger for the strategic parameter perturbation at DER layer, enhancing detection capability while ensuring circuit stability. By appropriately integrating the MG-DER layer information, the sensor biases could be recursively and accurately estimated without requiring extra hardware installation.

{\color{black}Experimental results validate the effectiveness of the proposed framework across multiple operating and attack scenarios. The resistance perturbation enhances the detectability of stealthy sensor spoofing attacks and enables accurate recursive bias reconstruction, while {its power-loss cost indicates the need for future perturbation-ratio optimisation rather than unbounded resistance increase}. The studies with delayed MG-DER coordination, boost-converter dynamics, load variations, and electrical-parameter uncertainties further show that the framework remains applicable beyond the nominal buck-converter setting, although coordination latency and parameter mismatch can degrade reconstruction accuracy. Comparative results against representative existing methods also demonstrate the benefit of combining circuit-side perturbation, UIO-based residual generation, and hierarchical recursive mitigation for networked DC microgrids. When recursive bias estimation becomes infeasible due to insufficient observability, the open-loop central regulation strategy can still constrain attack propagation at the expense of control performance. Future work will investigate systematic resistance-perturbation optimisation, delay-tolerant mitigation schemes, dedicated robust-UIO solvers with stronger optimality guarantees, and defence strategies for more general sensor spoofing attacks under limited observability.}

% \cite{9200511}.

% Experimental results verify that the triggered resistance perturbation could effectively enhance the detection capability against intelligent sensor spoofing attacks, with side effects of moderately increased power line loss. Qualitatively, a larger resistance perturbation can bring in better data reconstruction accuracy but with growing power line loss. Moreover, despite the asynchronous execution rate in MG and DER layers, the overall defence performance is still promising, supporting the realistic applicability of proposed framework. Future works will focus on applying physics-informed neural networks to improve the proposed framework's adaptability to varying cyber threats and environmental factors \cite{9200511}.

\begin{spacing}{1}
{\footnotesize
\bibliographystyle{IEEEtran} % Choose a style (e.g., plain, alpha, IEEEtran)
\bibliography{root}}

@article{chen1996design,
  title={Design of unknown input observers and robust fault detection filters},
  author={Chen, Jie and Patton, Ron J and Zhang, Hong-Yue},
  journal={International Journal of control},
  volume={63},
  number={1},
  pages={85--105},
  year={1996},
  publisher={Taylor \& Francis}
}

@ARTICLE{gao2016robustuio,
  author={Gao, Zhiwei and Liu, Xiaoxu and Chen, Michael Z. Q.},
  journal={IEEE Transactions on Industrial Electronics},
  title={\textcolor{black}{Unknown Input Observer-Based Robust Fault Estimation for Systems Corrupted by Partially-Decoupled Disturbances}},
  year={2016},
  volume={63},
  number={4},
  pages={2537--2547},
  doi={10.1109/TIE.2015.2497201}
}

@ARTICLE{9793599,
  author={Liu, Mengxiang and Zhao, Chengcheng and Deng, Ruilong and Cheng, Peng and Chen, Jiming},
  journal={IEEE Transactions on Control of Network Systems}, 
  title={False Data Injection Attacks and the Distributed Countermeasure in DC Microgrids}, 
  year={2022},
  volume={9},
  number={4},
  pages={1962-1974},
  doi={10.1109/TCNS.2022.3181483}}

@ARTICLE{10746504,
  author={Liu, Mengxiang and Zhang, Xin and Zhang, Rui and Zhou, Zhuoran and Zhang, Zhenyong and Deng, Ruilong},
  journal={IEEE Transactions on Smart Grid}, 
  title={Detection-Triggered Recursive Impact Mitigation against Secondary False Data Injection Attacks in Cyber-Physical Microgrids}, 
  year={2024},
  volume={},
  number={},
  pages={1-1},
  doi={10.1109/TSG.2024.3493754}}

@inproceedings{tan2015dc,
  title={DC-DC converter modeling and simulation using state space approach},
  author={Tan, Rodney HG and Hoo, Landon YH},
  booktitle={2015 IEEE Conference on Energy Conversion (CENCON)},
  pages={42--47},
  year={2015},
  organization={IEEE}
}

@ARTICLE{10643207,
  author={Liu, Mengxiang and Zhang, Xin and Zhu, Hengye and Zhang, Zhenyong and Deng, Ruilong},
  journal={IEEE Transactions on Information Forensics and Security}, 
  title={Physics-Aware Watermarking Embedded in Unknown Input Observers for False Data Injection Attack Detection in Cyber-Physical Microgrids}, 
  year={2024},
  volume={19},
  number={},
  pages={7824-7840},
  doi={10.1109/TIFS.2024.3447235}}

@INPROCEEDINGS{5941844,
  author={Kaur, Mandeep and Kakar, Shikha and Mandal, Danvir},
  booktitle={2011 3rd International Conference on Electronics Computer Technology}, 
  title={Electromagnetic interference}, 
  year={2011},
  volume={4},
  number={},
  pages={1-5},
  doi={10.1109/ICECTECH.2011.5941844}}

@inproceedings{barua2022halc,
  title={Halc: A real-time in-sensor defense against the magnetic spoofing attack on hall sensors},
  author={Barua, Anomadarshi and Faruque, Mohammad Abdullah Al},
  booktitle={Proceedings of the 25th International Symposium on Research in Attacks, Intrusions and Defenses},
  pages={185--199},
  year={2022}
}

@ARTICLE{10707330,
  author={Wang, Zhiyun and Pan, Kaikai and Xu, Wenyuan},
  journal={IEEE Transactions on Industrial Informatics}, 
  title={Sensor Attacks on Grid-Tie Photovoltaic Inverters: Synthetic Analysis and Real-Time Robust Detection}, 
  year={2025},
  volume={21},
  number={1},
  pages={820-829},
  doi={10.1109/TII.2024.3463704}}

@ARTICLE{10643338,
  author={Pan, Kaikai and Wang, Zhiyun and Dong, Jingwei and Palensky, Peter and Xu, Wenyuan},
  journal={IEEE Transactions on Industrial Electronics}, 
  title={Real-Time Estimation and Defense of PV Inverter Sensor Attacks With Hardware Implementation}, 
  year={2025},
  volume={72},
  number={3},
  pages={3228-3232},
  doi={10.1109/TIE.2024.3436516}}

@ARTICLE{10638139,
  author={Peng, Sha and Liu, Mengxiang and Chai, Li and Deng, Ruilong},
  journal={IEEE Transactions on Smart Grid}, 
  title={DST-GNN: A Dynamic Spatiotemporal Graph Neural Network for Cyberattack Detection in Grid-Tied Photovoltaic Systems}, 
  year={2025},
  volume={16},
  number={1},
  pages={330-343},
  doi={10.1109/TSG.2024.3445113}}

@ARTICLE{9580468,
  author={Zhang, Jinan and Guo, Lulu and Ye, Jin},
  journal={IEEE Transactions on Smart Grid}, 
  title={Cyber-Attack Detection for Photovoltaic Farms Based on Power-Electronics-Enabled Harmonic State Space Modeling}, 
  year={2022},
  volume={13},
  number={5},
  pages={3929-3942},
  doi={10.1109/TSG.2021.3121009}}

@ARTICLE{9621221,
  author={Liu, Mengxiang and Zhao, Chengcheng and Zhang, Zhenyong and Deng, Ruilong and Cheng, Peng and Chen, Jiming},
  journal={IEEE Transactions on Smart Grid}, 
  title={Converter-Based Moving Target Defense Against Deception Attacks in DC Microgrids}, 
  year={2022},
  volume={13},
  number={5},
  pages={3984-3996},
  doi={10.1109/TSG.2021.3129195}}

@inproceedings{BH2025solar,
  title={A Closer Look at the Gaps in the Grid: New Vulnerabilities and Exploits Affecting Solar Power Systems},
  author={Daniel, Santos and Francesco, La Spina and Stanislav, Dashevskyi},
  booktitle={Blackhat Aisa},
  pages={},
  year={2025}
}

@ARTICLE{11029084,
  author={Zhang, Suhan and Zhang, Xin and Zhang, Rui and Gu, Wei and Cao, Ge},
  journal={IEEE Transactions on Smart Grid}, 
  title={N-1 Evaluation of Integrated Electricity and Gas System Considering Cyber-Physical Interdependence}, 
  year={2025},
  volume={16},
  number={5},
  pages={3728-3742},
  doi={10.1109/TSG.2025.3578271}}

@misc{Forescout2025,
  author       = "{Forescout Research – Vedere Labs}",
  title        = "{SUN:DOWN – Destabilizing the Grid via Orchestrated Exploitation of Solar Power Systems}",
  year         = {2025},
  institution  = "{Forescout Technologies, Inc.}",
  url          = {https://www.forescout.com/resources/sun-down-research-report/},
  note         = "Accessed: 2025-03-28"
}

@inproceedings{barua2020hall,
  title={Hall Spoofing: A Non-Invasive DoS Attack on Grid-Tied Solar Inverter},
  author={Barua, Anomadarshi and Al Faruque, Mohammad Abdullah},
  booktitle={29th USENIX Security Symposium (USENIX Security 20)},
  pages={1273--1290},
  year={2020}
}

@inproceedings{yang2024rethink,
  title={\textcolor{black}{ReThink: Reveal the Threat of Electromagnetic Interference on Power Inverters}},
  author={Yang, Fengchen and Dan, Zihao and Pan, Kaikai and Yan, Chen and Ji, Xiaoyu and Xu, Wenyuan},
  booktitle={\textcolor{black}{Proceedings of the Network and Distributed System Security (NDSS) Symposium}},
  year={\textcolor{black}{2025}}
}

@inproceedings{tu2021transduction,
  title={Transduction shield: A low-complexity method to detect and correct the effects of EMI injection attacks on sensors},
  author={Tu, Yazhou and Tida, Vijay Srinivas and Pan, Zhongqi and Hei, Xiali},
  booktitle={Proceedings of the 2021 ACM Asia Conference on Computer and Communications Security},
  pages={901--915},
  year={2021}
}

@ARTICLE{9796617,
  author={Gu, Yunjie and Green, Timothy C.},
  journal={Proceedings of the IEEE}, 
  title={Power System Stability With a High Penetration of Inverter-Based Resources}, 
  year={2023},
  volume={111},
  number={7},
  pages={832-853},
  doi={10.1109/JPROC.2022.3179826}}

@ARTICLE{10459229,
  author={Liu, Mengxiang and Teng, Fei and Zhang, Zhenyong and Ge, Pudong and Sun, Mingyang and Deng, Ruilong and Cheng, Peng and Chen, Jiming},
  journal={IEEE Transactions on Smart Grid}, 
  title={Enhancing Cyber-Resiliency of DER-Based Smart Grid: A Survey}, 
  year={2024},
  volume={15},
  number={5},
  pages={4998-5030},
  doi={10.1109/TSG.2024.3373008}}

@ARTICLE{9737024,
  author={Liu, Zengji and Wang, Qi and Ye, Yujian and Tang, Yi},
  journal={IEEE Transactions on Smart Grid}, 
  title={A GAN-Based Data Injection Attack Method on Data-Driven Strategies in Power Systems}, 
  year={2022},
  volume={13},
  number={4},
  pages={3203-3213},
  doi={10.1109/TSG.2022.3159842}}

@ARTICLE{9583906,
  author={Deng, Chao and Guo, Fanghong and Wen, Changyun and Yue, Dong and Wang, Yu},
  journal={IEEE Transactions on Industrial Electronics}, 
  title={Distributed Resilient Secondary Control for DC Microgrids Against Heterogeneous Communication Delays and DoS Attacks}, 
  year={2022},
  volume={69},
  number={11},
  pages={11560-11568},
  doi={10.1109/TIE.2021.3120492}}

@ARTICLE{9832494,
  author={Yao, Weitao and Wang, Yu and Xu, Yan and Deng, Chao},
  journal={IEEE Transactions on Industrial Informatics}, 
  title={Cyber-Resilient Control of an Islanded Microgrid Under Latency Attacks and Random DoS Attacks}, 
  year={2023},
  volume={19},
  number={4},
  pages={5858-5869},
  doi={10.1109/TII.2022.3191315}}

@ARTICLE{10836761,
  author={Liu, Mengxiang and Zhang, Xin and Zhao, Chengcheng and Deng, Ruilong},
  journal={IEEE Transactions on Power Systems}, 
  title={Matrix Coding Enabled Impact Mitigation against Primary False Data Injection Attacks in Cyber-Physical Microgrids}, 
  year={2025},
  volume={},
  number={},
  pages={1-16},
  doi={10.1109/TPWRS.2025.3528322}}

@ARTICLE{9069415,
  author={Zhang, Zhuhaobo and Zhu, Fan and Xu, Dehong and Krein, Philip T. and Ma, Hao},
  journal={IEEE Transactions on Power Electronics}, 
  title={An Integrated Inductive Power Transfer System Design With a Variable Inductor for Misalignment Tolerance and Battery Charging Applications}, 
  year={2020},
  volume={35},
  number={11},
  pages={11544-11556},
  doi={10.1109/TPEL.2020.2987906}}

@ARTICLE{liu2011hybrid,
  author={Liu, Xiong and Wang, Peng and Loh, Poh Chiang},
  journal={IEEE Transactions on Smart Grid},
  title={\textcolor{black}{A Hybrid AC/DC Microgrid and Its Coordination Control}},
  year={2011},
  volume={2},
  number={2},
  pages={278-286},
  doi={10.1109/TSG.2011.2116162}}

@ARTICLE{yan2019smallsignal,
  author={Yan, Yimajian and Shi, Di and Bian, Desong and Huang, Bibin and Yi, Zhehan and Wang, Zhiwei},
  journal={IEEE Transactions on Smart Grid},
  title={\textcolor{black}{Small-Signal Stability Analysis and Performance Evaluation of Microgrids Under Distributed Control}},
  year={2019},
  volume={10},
  number={5},
  pages={4848-4858},
  doi={10.1109/TSG.2018.2869566}}

@ARTICLE{sahoo2021ac,
  author={Sahoo, Subham and Yang, Yongheng and Blaabjerg, Frede},
  journal={IEEE Transactions on Power Electronics},
  title={\textcolor{black}{Resilient Synchronization Strategy for AC Microgrids Under Cyber Attacks}},
  year={2021},
  volume={36},
  number={1},
  pages={73-77},
  doi={10.1109/TPEL.2020.3005208}}

@ARTICLE{he2022lowinertia,
  author={He, Changjun and He, Xiuqiang and Geng, Hua and Sun, Huadong and Xu, Shiyun},
  journal={IEEE Transactions on Energy Conversion},
  title={\textcolor{black}{Transient Stability of Low-Inertia Power Systems With Inverter-Based Generation}},
  year={2022},
  volume={37},
  number={4},
  pages={2903-2912},
  doi={10.1109/TEC.2022.3185623}}

@ARTICLE{li2022duality,
  author={Li, Yitong and Gu, Yunjie and Green, Timothy C.},
  journal={IEEE Transactions on Power Systems},
  title={\textcolor{black}{Revisiting Grid-Forming and Grid-Following Inverters: A Duality Theory}},
  year={2022},
  volume={37},
  number={6},
  pages={4541-4554},
  doi={10.1109/TPWRS.2022.3151851}}

@ARTICLE{yoo2020hybrid,
  author={Yoo, Hyeong-Jun and Nguyen, Thai-Thanh and Kim, Hak-Man},
  journal={IEEE Transactions on Sustainable Energy},
  title={\textcolor{black}{Consensus-Based Distributed Coordination Control of Hybrid AC/DC Microgrids}},
  year={2020},
  volume={11},
  number={2},
  pages={629-639},
  doi={10.1109/TSTE.2019.2899119}}

@ARTICLE{espina2021hybrid,
  author={Espina, Enrique and Cardenas-Dobson, Roberto and Simpson-Porco, John W. and Saez, Doris and Kazerani, Mehrdad},
  journal={IEEE Transactions on Power Electronics},
  title={\textcolor{black}{A Consensus-Based Secondary Control Strategy for Hybrid AC/DC Microgrids With Experimental Validation}},
  year={2021},
  volume={36},
  number={5},
  pages={5971-5984},
  doi={10.1109/TPEL.2020.3031539}}

@ARTICLE{chang2021interlinking,
  author={Chang, Jae Won and Lee, Gyu Sub and Moon, Seung Il and Hwang, Pyeong Ik},
  journal={IEEE Transactions on Smart Grid},
  title={\textcolor{black}{A Novel Distributed Control Method for Interlinking Converters in an Islanded Hybrid AC/DC Microgrid}},
  year={2021},
  volume={12},
  number={5},
  pages={3765-3779},
  doi={10.1109/TSG.2021.3074706}}

@ARTICLE{abramson2018reconfigurable,
  author={Abramson, Rose A. and Gunter, Samantha J. and Otten, David M. and Afridi, Khurram K. and Perreault, David J.},
  journal={IEEE Transactions on Power Electronics},
  title={\textcolor{black}{Design and Evaluation of a Reconfigurable Stacked Active Bridge DC-DC Converter for Efficient Wide Load Range Operation}},
  year={2018},
  volume={33},
  number={12},
  pages={10428-10448},
  doi={10.1109/TPEL.2018.2801306}}

@ARTICLE{mokhtar2019adaptiveDroop,
  author={Mokhtar, Mohamed and Ibrahim, Ahmed K. and El-Sattar, Ahmed A.},
  journal={IEEE Transactions on Smart Grid},
  title={\textcolor{black}{An Adaptive Droop Control Scheme for DC Microgrids Integrating Sliding Mode Voltage and Current Controlled Boost Converters}},
  year={2019},
  volume={10},
  number={2},
  pages={1685-1693},
  doi={10.1109/TSG.2017.2776281}}

@ARTICLE{wu2017virtualImpedance,
  author={Wu, Xiangyu and Shen, Chen and Iravani, Reza},
  journal={IEEE Transactions on Smart Grid},
  title={\textcolor{black}{Feasible Range and Optimal Value of the Virtual Impedance for Droop-Based Control of Microgrids}},
  year={2017},
  volume={8},
  number={3},
  pages={1242-1251},
  doi={10.1109/TSG.2016.2519454}}

@MISC{digikey_sd500l48,
  author={{DigiKey}},
  title={\textcolor{black}{{SD-500L-48 MEAN WELL USA Inc.}}},
  howpublished={\url{https://www.digikey.com/en/products/detail/mean-well-usa-inc/SD-500L-48/7706541}},
  note={Accessed: 2026-04-24}}

@MISC{digikey_aev20eb,
  author={{DigiKey}},
  title={\textcolor{black}{{AEV20E-B Altran Magnetics, LLC}}},
  howpublished={\url{https://www.digikey.com/en/products/detail/altran-magnetics-llc/AEV20E-B/18739452}},
  note={Accessed: 2026-04-24}}

@MISC{digikey_hs50r1j,
  author={{DigiKey}},
  title={\textcolor{black}{{HS50 R1 J Ohmite}}},
  howpublished={\url{https://www.digikey.com/en/products/detail/ohmite/HS50-R1-J/5307917}},
  note={Accessed: 2026-04-24}}

@ARTICLE{qin2025adaptive,
  author={Qin, Xiangyu and Lin, Zhengyu and Jiang, Wei and Lee, Hazel},
  journal={Energies},
  title={\textcolor{black}{Adaptive Line Resistance Estimation and Compensation for Accurate Power Sharing of Droop-Controlled DC Microgrids}},
  year={2025},
  volume={18},
  number={9},
  pages={2183},
  doi={10.3390/en18092183}}

@ARTICLE{liu2019droop,
  author={Liu, Yuchao and Green, Tim C. and Wu, Jian and Rouzbehi, Kumars and Raza, Ali and Xu, Dianguo},
  journal={IEEE Access},
  title={\textcolor{black}{A New Droop Coefficient Design Method for Accurate Power-Sharing in VSC-MTDC Systems}},
  year={2019},
  volume={7},
  pages={47605-47614},
  doi={10.1109/ACCESS.2019.2909044}}
\end{spacing}

{\appendix
\subsection{\texorpdfstring{Proof of Proposition \ref{propos:residual sensitivity}}{Proof of Proposition}}\label{appendix: proof of propos}
\begin{proof}
Since the sampling period $T_{samp}$ is close to the microsecond level in the converter primary control, the properties of discrete-time system \eqref{eq: Discrete DER SS Model} would largely inherit from the continuous-time system \eqref{eq: DER SS Model}. To simplify the analysis, the subsequent analysis will be based on the continuous-time dynamical forms of DER and UIO, with their relations to the discrete-time forms given by $F_i \sim F_{di}$ and $T_i \sim T_{di}$.
% , and $\tilde{\bm{r}}_i^{\phi} \sim \bm{r}_i^{\phi}$. 
Suppose that there are alternation steps on these electrical parameters, characterised by prefix $\Delta$, then the resulting residual variation can be calculated as
{\small\begin{align}\label{eq: sensitivity midterm}
    {\bm{r}}_i^{\phi}(k+1) = F_i{\bm{r}}_i^{\phi}(k) + T_i(\widetilde{A}_{i} - A_{i})\bm{\phi}_i(k) + T_i\widetilde{E}_{i}g_i^a(k),
\end{align}}where $\widetilde{A}_{i}, \widetilde{E}_{i}$ denote the system parameters after applying the alternations. According to \eqref{eq: system parameters} and \eqref{eq: UIO parameter 1}, $T_i$ should has the form of $[0,\bm{t}_{i2}]$, where $\bm{t}_{i2}$ is a non-zero vector. Then, it is deduced that the electrical parameter alterations that only cause variations on the first row of $A_i$ will not contribute to increasing the UIO residual, i.e., the establishment of \eqref{eq: zero sensitivity}. The last term in \eqref{eq: sensitivity midterm} is always zero due to $T_i\widetilde{E}_i = 0$, which will not be affected by the alteration of $C_{ti}$. Therefore, the residual variation $\bm{r}_i^{\phi}$ is only related to the alterations on electrical parameters $R_{ti}, L_{ti}$. 

In particular, the residual variation would be proportional to the middle term in \eqref{eq: sensitivity midterm}, which, assuming $\bm{\phi}_i = [1, 1]^{\rm T}$, can be approximated as
\begin{align}\label{eq: residual approximation}
    \bm{r}_i^{\phi} \varpropto
    \frac{(R_{ti} + 1)\Delta L_{ti}}{L_{ti}^2} \bm{t}_{i2} - \frac{\Delta R_{ti}}{L_{ti}} \bm{t}_{i2}.
\end{align}Then, \eqref{eq: inductance sensitivity} and \eqref{eq: resistance sensitivity} can be directly derived.
\end{proof}

\subsection{\texorpdfstring{Proof for the Observability of Matrix Pair $(F_i, H_i)$}{Proof for the Observability of Matrix Pair (Fi, Hi)}}\label{appendix: observability proof}
\begin{proof}
The matrix pair $(F_i, H_i)$ is observable if and only if for each $\lambda_i \in \sigma(F_i)$, there is 
\begin{align}\label{eq: observability conditions}
\rm{rank}\begin{bmatrix} 
\lambda_i I - F_i \\
H_i
\end{bmatrix}=3.
\end{align}According to \eqref{eq: EstimationParameter}, the upper triangle matrix $F_i$ has eigenvalues $\sigma(F_i) = \sigma(A_i) \cup \{1\}$, thus informing the division of subsequent proof into two parts: 1) For the eigenvalue $\lambda_i \in \sigma(A_i)$, the establishment of \eqref{eq: observability conditions} is equivalent to the observability of matrix pair $(A_i, \widetilde{T}_{i}(A_{i} - \widetilde{A}_{i}))$, i.e., 
\begin{align}\label{eq: reduced observability conditions}
\rm{rank}\begin{bmatrix} 
\lambda_i I - A_i \\
\widetilde{T}_{i}(A_{i} - \widetilde{A}_{i})
\end{bmatrix}=2,
\end{align}which can be assured if the corresponding eigenvector $\bm{\xi}_i$ satisfies
\begin{align}\label{eq: middle 1}
\widetilde{T}_{i}(A_{i} - \widetilde{A}_{i})\bm{\xi}_i \ne \bm{0}
\end{align}Based on the UIO condition $\widetilde{T}_i\widetilde{E}_i = \bm{0}$, \eqref{eq: middle 1} can be further transformed into 
\begin{align}\label{eq: middle 2}
(A_{i} - \widetilde{A}_{i})\bm{\xi}_i \ne \alpha \widetilde{E}_i, \alpha \ne 0.
\end{align}Since the resistance perturbation only alters entry $a_{22}$ of $A_i$, and vector $\widetilde{E}_i$ has only one non-zero element at the first entry, we have $(A_i - \widetilde{A}_i)\widetilde{E}_i = \bm{0}$, indicating that \eqref{eq: middle 2} is assured.

2) For eigenvalue $\lambda_i = 1$, \eqref{eq: observability conditions} means that the corresponding eigenvector $[\bm{\xi}_i^{\rm T}, \gamma_i]$ needs to ensure that when $(I-A_i)\bm{\xi}_i + E_i\gamma_i = \bm{0}$, $\widetilde{T}_i(A_i - \widetilde{A}_i)\bm{\xi}_i + \widetilde{T}_iE_i\gamma_i \ne \bm{0}$ is always satisfied. Substituting $\bm{\xi_i} = -(I-A_i)^{-1}E_i\gamma_i$ into the latter inequality, it is transformed into
\begin{align}\label{eq: middle 3}
\widetilde{T}_i((A_i - \widetilde{A}_i)(I-A_i)^{-1} + I)E_i \ne \bm{0}
\end{align}Since $\widetilde{T}_iE_i = \widetilde{T}_i\widetilde{E}_i = \bm{0}$, while the middle matrix $(A_i - \widetilde{A}_i)(I-A_i)^{-1} + I$ is going to change the direction of $E_i$, \eqref{eq: middle 3} always stands.
\end{proof}}

% \begin{figure*}[!h]
%     \centering
%     \includegraphics[width=1\linewidth]{Picture/SensorSpoofingAttackIllustration.jpg}
%     \caption{This figure illustrates the mechanisms of spoofing attacks against voltage and current sensors in DERs \cite{yang2024rethink}. In the left part, the parasitic capacitance carried on the sensor's PCB couples the high-frequency electric fields resulting from EMI injections into the differential operational amplifier (OP-AMP). Then, the coupled interference is rectified and amplified by OP-AMP, eventually outputted as positive or negative bias on the voltage reading. In the right part, the current sensor has not only the OP-AMP circuit but also a Hall element, which is a new entrance for EMI injections. Specifically, the induced magnetic ($B_A$) or electric ($E_A$) field around the Hall chip is converted as voltage deviation $V_A$, which, after the process of OP-AMP circuit, is outputted as positive or negative bias on the current reading.}
%     \label{fig:sensorspoofing}
% \end{figure*}

\end{spacing}

\end{document}